\newtheorem{result}{Result}
\newcommand{\R}{\mathbb{R}}
\newcommand{\pref}[1]{\prec_{#1}}
\newcommand{\SMP}[1]{(#1,\prec)}
\begin{document}
	
	\title{The weighted stable matching problem}
	
	\author{Linda Farczadi  \thanks{This research was supported by the The Swiss National Science Foundation, with grant ref. FNS 513835}, Nat\'alia Guri\v{c}anov\'a}

\institute{\'Ecole polytechnique f\'ed\'erale de Lausanne}

	\maketitle

	\begin{abstract}  We study the stable matching problem in non-bipartite graphs with incomplete but strict preference lists, where the edges have weights and the goal is to compute a stable matching of minimum or maximum weight. This problem is known to be NP-hard in general. Our contribution is two fold: a polyhedral characterization and an approximation algorithm. Previously Chen et al. have shown that the stable matching polytope is integral if and only if the subgraph obtained after running phase one of Irving's algorithm is bipartite. We improve upon this result by showing that there are instances where this subgraph might not be bipartite but one can further eliminate some edges and arrive at a bipartite subgraph. Our elimination procedure ensures that the set of stable matchings remains the same, and thus the stable matching polytope of the final subgraph contains the incidence vectors of all stable matchings of our original graph. This allows us to characterize a larger class of instances for which the weighted stable matching problem is polynomial-time solvable. We also show that our edge elimination procedure is best possible, meaning that if the subgraph we arrive at is not bipartite, then there is no bipartite subgraph that has the same set of stable matchings as the original graph. We complement these results with a $2$-approximation algorithm for the minimum weight stable matching problem for instances where each agent has at most two possible partners in any stable matching. This is the first approximation result for any class of instances with  general weights.
\end{abstract}
	
\section{Introduction}\label{section:introduction}
	
An instance of the \emph{Stable Matching Problem} (SMP) is a pair $\SMP G$, where $G = (V,E)$ is a graph and $\prec \; = \{\prec_v\}_{v \in V}$ is a set of \emph{preference lists}, where for each $v \in V$, $\prec_v$ is a strict linear order on $\delta(v)$, the neighbors of $v$ in $G$. The vertices of $G$ represent the set of agents and the edges correspond to mutually acceptable pairs. If an agent $v \in V$ has two neighbors $u,w$ in $G$ such that $u \pref v w$ then we say that $v$ \emph{prefers} $u$ \emph{over} $w$.  Let $M \subseteq E$ be a matching of $G$. For each vertex $v \in V$, we denote $M(v)$ its partner in $M$, or let $M(v) = v$ if $v$ is unmatched in $M$. The edge $uv \in E$ is a \emph{blocking edge} for $M$ if $u$ is unmatched or prefers $v$ to its current partner and if, at the same time, $v$ is unmatched or prefers $u$ to its current partner. A matching $M$ is \emph{stable} if no edge $e \in E$ is a blocking edge for $M$. In the \emph{Weighted Stable Matching Problem} (wSMP), in addition to $\SMP G$, a weight $w(e) \geq 0$ is assigned to each edge $e \in E$ and we wish to find a stable matching $M$ of maximum or minimum weight. 

When $G$ is bipartite, we obtain the well-studied Stable Marriage Problem first introduced by Gale and Shapley \cite{gale1962college}. In their seminal work, Gale and Shapley showed that every instance of SMP admits a solution and such a solution can be computed efficiently using the so-called deferred acceptance algorithm. In fact, both SMP and wSMP are solvable in polynomial time. In particular, there exists a system of linear inequalities, known as the stable matching polytope, that describes the convex hull of the incidence vectors of stable matchings. Since its introduction, the stable marriage problem has become one of the most popular combinatorial problems with several books being dedicated to its study \cite{gusfield1989stable}, \cite{roth1992two} and more recently \cite{manlove2013algorithmics}.  The popularity of this model arises not only from its nice theoretical properties but also from its many applications. In particular, a wide array of allocation problems from many diverse fields can be analysed within its context. Some well known examples include the labour market for medical interns, auction markets, the college admissions market and the organ donor-recipient pair market \cite{roth1992two}.

If $G$ is non-bipartite we have what is known as the Stable Roommates Problem. This problem was also proposed by Gale and Shapley \cite{gale1962college}, however its properties are quite different from the bipartite case. To begin with, not every instance admits a stable matching. Irving provided a polynomial time algorithm that either finds a stable matching or reports that none exist. Unlike in the bipartite case, the stable matching polytope is no longer integral, and the wSMP problem becomes NP-hard in general. 

\subsection{Our contribution and results.} We study instances where $G$ is non-bipartite. Our main result is to characterize a new and larger class of instances for which the wSMP can be solved in polynomial time. We say that an instance $\SMP G$ is \emph{bipartite reducible} if there exists a subgraph $H$ of $G$ such that $H$ is bipartite and has the same set of stable matchings as $G$. Our first result is polynomial time edge-elimination procedure that either finds such a bipartite subgraph or determines that the original instance is not bipartite reducible. 
\begin{result}
There exists a polynomial time algorithm that either finds a  bipartite subgraph $H$ of $G$ whose set of stable matchings is the same as that of $G$, or determines that no such subgraph exists.
\end{result}
Since the stable matching polytope is integral for bipartite graphs, this implies that we can optimize a linear function over the set of stable matchings of any bipartite reducible instance. Previously, this was known only for instances where the subgraph obtained after running phase one of Irving's algorithm is bipartite. We show that our result is a strict generalization.
\begin{result}wSMP is polynomially solvable for all instances $\SMP G$ that are bipartite reducible. Moreover, the class of bipartite reducible instances is a strict superset of the class of instances for which phase one of Irving's produces a bipartite graph.\end{result}
We then consider approximation algorithms for the minimum weight stable matching problem. Previous research on this topic has focused on special classes of weight functions such as the egalitarian stable matching and $U$-shaped weights. However, no approximation algorithm was previously known for any class of instances under general weights. We provide a first result of this kind by considering instances where each agent is matched to one of two possible partners in any stable matchings. 
\begin{result}There exists a $2$-approximation algorithm for the minimum-weight stable matching problem for instances where each agent has at most two possible partners in any stable matching.\end{result}

\subsection{Related work.} 

 Gale and Shapley \cite{gale1962} showed that a stable matching always exists if $G$ is bipartite, and gave a polynomial time algorithm known as the deferred acceptance algorithm for finding such a matching. They also observed that if $G$ is non-bipartite, a stable matching does not always exist. Irving \cite{irving1985} gave the first polynomial-time algorithm that either finds a stable matching when $G$ is non-bipartite, or determines that no such matching exists. His algorithm was originally for instances where the underlying graph $G$ is the complete graph, however it can be easily generalized for any graph $G$. 
  
 Vande Vate \cite{vandevate1992} first characterized incidence vectors of perfect stable matchings as vertices of a certain polytope in the case when $G$ is a complete bipartite graph, thus showing that wSMP is polynomial-time solvable in this case. Later, Rothblum \cite{rothblum1992} extended this result to general bipartite graphs.
	
Feder \cite{feder1992} showed that wSMP is NP-hard in general. A particular case of weight function is the \emph{egalitarian weight function}, where for each edge $e = uv$ its weight $w(e)$ is given by the sum of ranks of this particular edge in its endpoints' preference lists. Feder \cite{feder1992},\cite{feder1994} gave a 2-approximation for finding a minimum-weight stable matching with egalitarian weights and showed that there exists an $\alpha$-approximation for minimum-weight SMP if and only if there exists an $\alpha$-approximation for the Minimum Vertex Cover. Later, it was showed that, assuming the Unique Games Conjecture, Minimum Vertex Cover cannot be approximated within $2 - \epsilon$ for any $\epsilon > 0$ \cite{khot2008}. 
	
Teo and Sethuraman \cite{teo1998} constructed a 2-aproximation algorithm for the minimum-weight SMP for a larger class of weight functions than the egalitarian weight functions, namely weight functions where $w(uv) = w_u(v) + w_v(u)$ for each $uv \in E$, and the functions $\{w_u\}$ satisfy the so-called \emph{U-shape} condition. Recently, Cseh et al. \cite{cseh2016} considered a class of minimum-weight SMP problems with egalitarian weights where all preference lists are of length at most $d$. For $d = 2$, they gave a polynomial time algorithm for solving this problem while they showed that it is NP-hard even for $d = 3.$ Moreover, for $d \in\{3,4,5\}$, they gave a $\frac{2d + 3}{7}$-approximation algorithm for minimum-weight SMP with egalitarian weights, improving the results of Feder.
	
\section{Preliminaries}\label{section:preliminaries}
	
Let $\SMP G$ be an instance of SMP. If $e = uv$ and $f = wv$ are two edges in $E$ with $u \pref v w$, we can write this equivalently as $e \pref v f$, and we say that $e$ \emph{dominates} $f$ \emph{at} $v$. For each $e \in E$, let $\phi(e)$ be the set containing $e$ and all the edges that dominate $e$ at one of the endpoints. That is,
\begin{align*}
\phi(e) = \{f \in E: \exists \, v \in V: f \pref v e\} \cup \{e\}
\end{align*}
Note that under this definition, a matching is stable if and only if $|M \cap \phi(e)| \geq 1$ for each $e \in E$. For each $u,v$ s.t. $uv \in E$, we let $rank_u(v)$ be the rank of $v$ in $u$'s preference list. We say that a matching $M$ is \emph{perfect} if every agent is matched in $M$; that is $M(v) \neq v$, for all $v \in V$. For each subgraph $K$ of $G$, we let $\prec^K$ be the restriction of $\prec$ to $K$. For simplicity, we will denote the SMP instance $(K, \prec^K)$ by $(K, \prec)$. For each $v \in V$, we denote the most and least preferred partner of $v$ in $K$ by $f_K(v)$ and $l_K(v)$, respectively. We denote $E(K)$ the set of edges of $K$ and $V(K) $ the set of vertices of $K$.

\begin{definition}[Stable matching polytope]
	For an instance $\SMP G$ of SMP, we let the \emph{Stable matching polytope} $SM \SMP G$ be the convex hull of incidence vectors of its stable matchings. 
\end{definition}
If $x = \chi^M$ is an incidence vector of a stable matching $M$, it satisfies the following inequalities:
	\begin{itemize}
		\item[(i)] matching:  
		\begin{equation}
		x(\delta(v)) \leq 1 \; \forall v \in V
		\label{ineq:match}
		\end{equation} 
		\item[(ii)] stability: 
		\begin{equation}
		x(\phi(e)) \geq 1 \; \forall e \in E
		\label{ineq:stab}
		\end{equation}
		\item[(iii)] non-negativity: 
		\begin{equation}
		x \geq 0
		\label{ineq:nonneg}
		\end{equation} 

	\end{itemize}
\begin{definition}[Fractional stable matching polytope]
	For an instance $\SMP G$ of SMP the \emph{Fractional stable matching polytope} is given by
		\begin{eqnarray}
		FSM \SMP G = \left\{ x \in \R^{E} : \begin{array}{rr} x(\delta(v)) \leq 1 & \forall v \in V\\ x (\phi(e)) \geq 1 & \forall e \in E\\ x \geq 0 &  \end{array}\right\},
		\label{def:fsm}
		\end{eqnarray}
\end{definition}
The Fractional stable matching polytope was first studied by Abeledo and Rothblum \cite{abeledo1994}, who proved that the incidence vectors of stable matchings are precisely the integral vertices of $FSM \SMP G$. They also showed that this polytope is integral if $G$ is bipartite, and that all the vertices of $FSM \SMP G$ are \emph{half-integral} in general, meaning that $x_e \in \left\{0,\frac{1}{2},1\right\}$ for each $e \in E$ for each vertex $x$. 
	
Irving \cite{irving1985} gave a two-phase polynomial-time algorithm for finding a stable matching of an instance $\SMP G$ or determining that no such matching exists. Phase one of this algorithm consists of a proposal sequence that results in a unique subgraph of $G$ that has the same set of stable matchings as $\SMP G$.
	
\begin{definition}[Irving's phase one]	For an instance $\SMP G$ of SMP let $G_I$ be the subgraph of $G$ obtained by applying phase one of Irving's algorithm to $\SMP G$. 
\end{definition}
Chen et al. \cite{chen2012} show that $FSM \SMP G$ is integral if and only if the graph $G_I$ is bipartite.  Note that, for bipartite graphs $G$, the integrality of $FSM \SMP G$ follows easily from this result.
	\begin{theorem}[Chen \cite{chen2012}]
		The following are equivalent.
		\begin{enumerate}
			\item $FSM \SMP G$ is integral.
			\item $G_I$ is a bipartite graph.
		\end{enumerate}
\label{thm:chen}
\end{theorem}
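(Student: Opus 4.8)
The plan is to deduce the equivalence from the half-integrality theorem of Abeledo and Rothblum together with the combinatorial properties of Irving's phase one, after first reducing the whole question to the graph $G_I$. The key preliminary step is a reduction lemma: every $x \in FSM\SMP G$ has $x_e = 0$ for every $e \in E(G)\setminus E(G_I)$, so that the restriction map $x \mapsto x|_{E(G_I)}$ is an affine isomorphism of $FSM\SMP G$ onto $FSM\SMP{G_I}$. I would prove this by induction on the order in which phase one deletes edges: when $uv$ is deleted, one of its endpoints, say $u$, is at that moment engaged to a strictly more preferred partner $p$, and a short argument combining the stability inequality for $up$, the matching inequality at $u$, and the induction hypothesis (which applies to the edges already ruled out at $p$, all of which lie in $\phi(up)$) shows $x_{uv}=0$; one then checks that on vectors supported in $E(G_I)$ the stability inequalities of $FSM\SMP G$ indexed by deleted edges are implied by those of $FSM\SMP{G_I}$. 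This is exactly the polyhedral shadow of the fact that phase one preserves the set of stable matchings. Granting the lemma, the implication $2 \Rightarrow 1$ is immediate: if $G_I$ is bipartite then $FSM\SMP{G_I}$ is integral by the bipartite case of Abeledo and Rothblum, hence so is $FSM\SMP G$.

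For $1 \Rightarrow 2$ I would prove the contrapositive, exhibiting a fractional --- hence, by Abeledo and Rothblum, half-integral --- vertex of $FSM\SMP{G_I}$ whenever $G_I$ is non-bipartite. The structural input is the property of phase one that, in $G_I$, the vertex $u$ is the most preferred partner of $v$ precisely when $v$ is the least preferred partner of $u$; this makes the first-choice map $f_{G_I}$ a fixed-point-free permutation of $V(G_I)$ whose nontrivial orbits trace out cycles of $G_I$. Starting from an odd cycle of $G_I$, one would locate a cycle $\Gamma$ of $G_I$ on which the half-integral vector $x$ defined by $x_e = \tfrac12$ for $e \in E(\Gamma)$, extended by an arbitrary vertex of the nonempty polytope $FSM\SMP{G_I - V(\Gamma)}$ on the complementary instance and by $0$ elsewhere, is feasible. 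At every vertex of $\Gamma$ the matching constraint of $x$ is then tight, and these equalities, together with the non-negativity constraints on the edges leaving $\Gamma$ and the tight constraints inherited from the complementary vertex, form a full-rank system whose unique solution is $x$; hence $x$ is a vertex of $FSM\SMP{G_I}$, and it is not integral.

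I expect this forward direction to be the main obstacle, for two intertwined reasons. First, in contrast with the familiar perfect-matching setting, fractional vertices of $FSM$ need not be supported on vertex-disjoint \emph{odd} cycles: such a vertex may live on an \emph{even} cycle that is ``pinned'' --- prevented from being perturbed along its alternating direction --- by a stability inequality arising from a chord of that cycle. Thus $\Gamma$ cannot simply be taken to be an arbitrary odd cycle of $G_I$; one must argue, using how phase one arranges the reduced preference lists (in particular the extremal positions of the cycle edges at their endpoints), that an odd cycle of $G_I$ always gives rise to a cycle $\Gamma$ --- itself not necessarily odd --- together with whatever chord or exiting edge is needed to pin the half-integral point down to a vertex. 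Second, checking feasibility of $x$ amounts to ruling out a violated stability inequality at an edge $vw$ with $v \in V(\Gamma)$ and $w \notin V(\Gamma)$, where the interplay between the extremality of the cycle edges at $v$ and the stability of the chosen point on $G_I - V(\Gamma)$ must be controlled. The reduction lemma, by comparison, is routine once the induction is set up, being essentially the correctness proof of phase one carried out inside the polytope.
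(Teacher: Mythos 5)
You should first note that the paper does not actually prove Theorem~\ref{thm:chen}: it is imported from Chen et al.\ as a black box. The closest thing to a proof in the paper is the proof of Theorem~\ref{thm:integrality_of_overlineFSM}, the analogue for $\overline{FSM}\SMP G$ and $H$, and that three-step argument specializes to $FSM\SMP G$ and $G_I$. Your direction $2 \Rightarrow 1$ matches the standard route: show $x_e=0$ for $e\in E\setminus E(G_I)$ and that the stability constraints of deleted edges are implied, then invoke Abeledo--Rothblum's bipartite integrality for $G_I$. That half is fine in outline (the implication of the deleted stability constraints needs the perfect-matching normalization via Lemma~\ref{lemma:AR_V0_V1}, but it goes through).

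The genuine gap is in $1 \Rightarrow 2$. You propose to exhibit a fractional vertex by locating a cycle $\Gamma$ of $G_I$ supporting a feasible half-integral point and pinning it down to a vertex, but you never say how $\Gamma$ is produced from an odd cycle of $G_I$, why the stability inequalities on edges between $V(\Gamma)$ and its complement are satisfied by your extension, or why the full-rank system you invoke actually has the claimed rank; you yourself flag each of these as the main obstacle. As written, the hard direction is a statement of intent rather than a proof. The argument the paper runs for its analogue avoids constructing a fractional vertex altogether: take the canonical semi-stable partition $F=\{vf_{G_I}(v): v\in V\}$ of first-choice edges. The first/last property of phase one makes $F$ a disjoint union of single edges and cycles with cyclic preferences; $x_F$ is feasible because every edge of $G_I$ is dominated at each endpoint by the first-choice edge there, and those two edges carry total value at least $1$; and every edge of $G_I$ is intermediate for $F$, so $H_F=G_I$. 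If $FSM\SMP G$ were integral, $x_F$ would be a convex combination $\sum_i\lambda_i\chi^{M^i}$ of stable matchings with $supp(\chi^{M^i})\subseteq supp(x_F)$, and splitting the cycle vertices according to whether $M^1$ matches them to their preferred cycle-neighbour yields, via the stability inequalities exactly as in the Claim inside the proof of Theorem~\ref{thm:integrality_of_overlineFSM}, a proper $2$-colouring of $H_F=G_I$. Hence a non-bipartite $G_I$ forces non-integrality with no need to identify a vertex or to worry about whether fractional vertices live on odd cycles. I recommend replacing your $\Gamma$-construction with this convex-combination/bipartition argument.
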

	
We will use the following well known property of stable matchings, which says that whenever an edge is in the support of some fractional stable matching then the corresponding stability constraint for this edge is satisfied with equality. 
	
	\begin{lemma} [Abeledo, Rothblum \cite{abeledo1994}]
		Let $x \in FSM \SMP G$. Then for any $e \in E$, 
		\begin{equation}
		x_e > 0 \;\Rightarrow \;x(\phi(e)) = 1.
		\end{equation}
		\label{lemma:AR_xe_positive}
	\end{lemma}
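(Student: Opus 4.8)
The plan is to prove the implication for all edges at once, by establishing the single scalar identity $\sum_{e\in E}x_e\bigl(x(\phi(e))-1\bigr)=0$. Since $x\ge 0$ and $x(\phi(e))\ge 1$ for every $e$ (both are among the defining inequalities of $FSM\SMP{G}$), each summand $x_e(x(\phi(e))-1)$ is nonnegative, so such an identity forces every summand to vanish; in particular $x_e>0$ implies $x(\phi(e))=1$, which is exactly the claim. Everything thus reduces to evaluating $\sum_e x_e\,x(\phi(e))$ and comparing it with $\sum_e x_e$.

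First I would expand $\sum_e x_e\,x(\phi(e))=\sum_{(e,f):\,f\in\phi(e)}x_ex_f$ and separate the diagonal contribution $f=e$, which equals $\sum_e x_e^2$. For the off-diagonal terms, observe that for distinct edges $e,f$ one has $f\in\phi(e)$ precisely when $e$ and $f$ share a vertex $v$ with $f\pref{v}e$; since $\prec_v$ is a strict total order on $\delta(v)$ and (as $G$ is simple) two distinct edges share at most one vertex, each unordered pair of distinct adjacent edges is charged exactly once, so the off-diagonal sum equals $\sum_{v\in V}\sum_{\{e,f\}\subseteq\delta(v),\,e\ne f}x_ex_f=\tfrac12\sum_{v}\bigl((x(\delta(v)))^2-\sum_{e\in\delta(v)}x_e^2\bigr)$. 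Using $\sum_v\sum_{e\in\delta(v)}x_e^2=2\sum_e x_e^2$ (each edge has two endpoints), the $x_e^2$ terms cancel and one is left with the clean identity
\[
\sum_{e\in E}x_e\,x(\phi(e))\;=\;\frac12\sum_{v\in V}\bigl(x(\delta(v))\bigr)^2 .
\]

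Next I would sandwich this quantity between $\sum_e x_e$ from two sides. On one hand, $x(\phi(e))\ge 1$ gives $\sum_e x_e\,x(\phi(e))\ge\sum_e x_e=\tfrac12\sum_v x(\delta(v))$. On the other hand, $0\le x(\delta(v))\le 1$ gives $(x(\delta(v)))^2\le x(\delta(v))$ for every $v$, hence $\tfrac12\sum_v (x(\delta(v)))^2\le\tfrac12\sum_v x(\delta(v))=\sum_e x_e$. Combining these with the displayed identity, all three quantities coincide, so in particular $\sum_e x_e\,x(\phi(e))=\sum_e x_e$, i.e.\ $\sum_e x_e(x(\phi(e))-1)=0$, and the lemma follows. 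As a free by-product, equality in the second estimate forces $x(\delta(v))\in\{0,1\}$ for every $v\in V$ — the fractional analogue of the ``rural hospitals'' invariance.

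The only delicate point is the bookkeeping in the double count: one has to check carefully that each unordered pair of adjacent edges is charged exactly once (this is where strictness of each $\prec_v$ and simplicity of $G$ enter) and that two disjoint distinct edges never lie in one another's $\phi$-set (immediate from the definition of $\phi$). I do not expect a genuine obstacle. A proof via linear programming duality and complementary slackness is also conceivable, but the counting identity is shorter and, unlike an optimality-based argument, is valid for every $x\in FSM\SMP{G}$ with no appeal to $x$ being a vertex or an optimal point, which matters here since the lemma is stated for arbitrary members of the polytope.
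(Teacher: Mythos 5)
Your proof is correct. Note that the paper itself offers no proof of this lemma: it is quoted from Abeledo and Rothblum \cite{abeledo1994} and used as a black box, so there is no in-paper argument to compare against. Your double-counting identity $\sum_{e\in E}x_e\,x(\phi(e))=\tfrac12\sum_{v\in V}\bigl(x(\delta(v))\bigr)^2$ is verified correctly: for distinct adjacent edges $e,f$ meeting at $v$, strictness of $\prec_v$ puts exactly one of the two in the other's $\phi$-set, simplicity of $G$ ensures they meet at only one vertex, and disjoint edges never appear in each other's $\phi$-sets, so each unordered adjacent pair is charged exactly once and the $\sum_e x_e^2$ terms cancel as you say. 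The sandwich between the stability bound $\sum_e x_e\,x(\phi(e))\ge\sum_e x_e$ and the matching bound $\tfrac12\sum_v (x(\delta(v)))^2\le\tfrac12\sum_v x(\delta(v))=\sum_e x_e$ then forces $\sum_e x_e(x(\phi(e))-1)=0$ with all summands nonnegative, which gives the claim for every point of $FSM\SMP{G}$, not just vertices. This is essentially the classical Abeledo--Rothblum-style counting argument, and your observation that equality in the second estimate forces $x(\delta(v))\in\{0,1\}$ is exactly the polytope half of Lemma \ref{lemma:AR_V0_V1}, so the same computation yields both cited facts at once. No gaps.
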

In the same paper, Albedo and Rothblum show that the set of matched vertices is the same for every stable matching $M$ of $\SMP G$. 
\begin{lemma}[Abeledo and Rothblum \cite{abeledo1994}]
		Given $\SMP G$, the vertex set $V$ can be partitioned into two sets $V^0$ and $V^1$ such that
		\begin{eqnarray*}
			&& V^0 = \{v \in V: v \text{ is not matched in any stable matching } M \text{ of } \SMP G\} \\
			&& V^1 = \{v \in V: v \text{ is matched in every stable matching } M \text{ of } \SMP G\}. 
		\end{eqnarray*}
		Moreover, for each $j = 0,1$ and $v \in V^j$, it holds that $x(\delta(v)) = j, \;\; \forall \;x \in FSM \SMP G$.
		\label{lemma:AR_V0_V1}
\end{lemma}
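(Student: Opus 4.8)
The statement bundles two assertions. First, the set of vertices saturated by a stable matching of $\SMP G$ is the same for every stable matching; this is what makes the partition $V=V^0\cup V^1$ well defined, under the standing assumption that $\SMP G$ has at least one stable matching. Second, the fractional refinement that $x(\delta(v))=j$ for every $x\in FSM \SMP G$ and every $v\in V^j$. The plan is to prove these separately: the first by an elementary alternating-path argument, the second by reducing to extreme points of $FSM \SMP G$ and exploiting their half-integrality together with Lemma~\ref{lemma:AR_xe_positive}.

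For the first assertion, suppose $M$ and $N$ are stable matchings of $\SMP G$ and, towards a contradiction, that some vertex $v_0$ is saturated by $M$ but not by $N$. In the symmetric difference $M\triangle N$, which is a vertex-disjoint union of alternating paths and even cycles, $v_0$ has degree one, so its connected component is an alternating path $P=(v_0,v_1,\dots,v_k)$ with $v_0v_1\in M$; thus $v_iv_{i+1}\in M$ exactly when $i$ is even. If $k=1$ then $v_1$ is also unsaturated by $N$ and $v_0v_1$ blocks $N$, a contradiction, so $k\ge 2$. The crux is to show, by induction on $i$, that $v_{i+1}\pref{v_i}v_{i-1}$ for all $1\le i\le k-1$: the base case holds because $v_0$ is unsaturated by $N$, so stability of $N$ forces $v_1$ to prefer its $N$-partner $v_2$ over $v_0$; and for the step, the hypothesis $v_i\pref{v_{i-1}}v_{i-2}$ says precisely that $v_{i-1}$ prefers $v_i$ over its partner along $v_{i-2}v_{i-1}$, the latter edge lying in whichever of $M,N$ does not contain $v_{i-1}v_i$, so that matching would be blocked at $v_{i-1}v_i$ unless $v_i$ prefers its partner $v_{i+1}$ in the other matching over $v_{i-1}$. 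Taking $i=k-1$ gives $v_k\pref{v_{k-1}}v_{k-2}$; but $v_k$ is an endpoint of $P$, hence unsaturated by exactly the one of $M,N$ not containing $v_{k-1}v_k$, so $v_{k-1}v_k$ blocks that matching, a contradiction. This proves the first assertion; note it also follows that no edge of $G$ has both endpoints in $V^0$, since such an edge would block every stable matching.

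For the second assertion, convexity of $FSM \SMP G$ and linearity of $x\mapsto x(\delta(v))$ reduce the claim to the extreme points $x$ of $FSM \SMP G$, which are half-integral by Abeledo--Rothblum. Using Lemma~\ref{lemma:AR_xe_positive} and the matching inequalities, I would first show that such an $x$ is supported on a matching $N_1=\{e:x_e=1\}$ together with a vertex-disjoint union of odd cycles formed by the edges with $x_e=\tfrac12$ (extremality rules out even cycles, which could be split, and paths, which could be perturbed). Hence, writing $S=\{v:x(\delta(v))>0\}$, we have $x(\delta(v))=1$ for $v\in S$ and $x(\delta(v))=0$ for $v\notin S$, and it remains to show $S=V^1$. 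For this I would round $x$ to a matching $M_x$ by replacing each odd cycle with a near-perfect matching and adjoining $N_1$. If $M_x$ is stable then, by the first assertion, its set of saturated vertices is exactly $V^1$; this set is contained in $S$, giving $V^1\subseteq S$, and since one may choose which vertex of each odd cycle to leave uncovered, $M_x$ can be arranged to saturate any prescribed vertex of $S$, giving $S\subseteq V^1$; hence $S=V^1$. (Alternatively, once $x(\delta(v))=0$ is known for all $v\in V^0$: from the inequality $\sum_{e\in M}x(\phi(e))\ge|M|$ (valid since $x\in FSM \SMP G$) and the fact that stability of $M$ makes each edge lie in $\phi(e)$ for at most one $e\in M$, one gets $|x|\ge|M|=|V^1|/2$; combined with $2|x|=\sum_{v\in V^1}x(\delta(v))\le|V^1|$ this forces $x(\delta(v))=1$ for every $v\in V^1$.)

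The step I expect to be the main obstacle is verifying that the rounded matchings $M_x$ are genuinely stable: an arbitrary rounding of a half-integral extreme point need not be stable, so the near-perfect matchings of the odd cycles must be chosen compatibly with the tight stability constraints $x(\phi(e))=1$ furnished by Lemma~\ref{lemma:AR_xe_positive}; this is the technical heart of the second assertion. By comparison the alternating-path argument for the first assertion is routine once the inductive preference inequality is phrased correctly, the only care it requires being the parity bookkeeping and the small $k=1$ endpoint case.
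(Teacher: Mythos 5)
The paper does not prove this lemma --- it is imported verbatim from Abeledo and Rothblum \cite{abeledo1994} --- so your argument has to stand on its own. Your proof of the first assertion (the alternating-path induction showing $M$ and $N$ saturate the same vertices) is correct and standard; the parity bookkeeping, the $k=1$ case, and the endpoint contradiction at $v_k$ all check out. The problem is the second assertion. Your main route --- round a half-integral extreme point $x$ by replacing each odd cycle of its support with a near-perfect matching and argue that the resulting $M_x$ is stable --- is not merely ``the technical heart''; it is false. If the support of $x$ contains an odd cycle, then the various roundings $M_x$ leave different vertices uncovered, so by your own first assertion at most one vertex-set of saturated vertices can be realized by a stable matching; they cannot all be stable, and generically none of them are. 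The paper's Example \ref{ex:diff_between_phase1_and_alg} is a concrete counterexample: the fractional vertex $y$ is supported on two triangles, every rounding of it leaves two agents unmatched, yet the unique stable matching $\{14,25,36\}$ is perfect, so no rounding of $y$ is stable. This is exactly the phenomenon that makes the roommates problem harder than the marriage problem, and no choice of ``compatible'' near-perfect matchings can repair it.

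Fortunately your parenthetical alternative is essentially the right proof, and it needs neither extreme points nor half-integrality; its only defect is that it is stated conditionally on $x(\delta(v))=0$ for all $v\in V^0$, which you never establish. You can close this gap by refining your own counting step. Fix a stable matching $M$ (which saturates exactly $V^1$, so $|M|=|V^1|/2$) and for $f\in E$ let $c_f=|\{e\in M: f\in\phi(e)\}|$. Stability of $M$ gives $c_f\le 1$ for every $f$, as you observed; but moreover $c_f=0$ for every edge $f$ incident to $V^0$, because such an $f=uv$ with $v\in V^0$ has $f\notin M$, $v$ unmatched, and (since $f$ does not block $M$) $u$ matched to someone it prefers to $v$, so $f$ dominates no edge of $M$. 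Hence for any $x\in FSM\SMP G$,
\begin{equation*}
|V^1|/2 \;=\; |M| \;\le\; \sum_{e\in M} x(\phi(e)) \;=\; \sum_{f\in E} c_f\, x_f \;\le\; |x| \;-\; \sum_{v\in V^0} x(\delta(v)),
\end{equation*}
while $2|x|=\sum_{v\in V^1}x(\delta(v))+\sum_{v\in V^0}x(\delta(v))\le |V^1|+\sum_{v\in V^0}x(\delta(v))$. Combining the two forces $\sum_{v\in V^0}x(\delta(v))\le 0$, hence $x(\delta(v))=0$ on $V^0$, and then all the inequalities $x(\delta(v))\le 1$ for $v\in V^1$ must be tight. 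I would discard the extreme-point/rounding route entirely and promote this counting argument to the actual proof.
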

	
For our purposes, it will be easier to work with instance where stable matchings are perfect, meaning that $V_0 = \emptyset$. We show that we can focus on such instances without any loss of generality. First, we can use Irving's algorithm to decide if stable matchings exists, and if so, find the bipartition of the agents into $V_0$ and $V_1$ as in Lemma \ref{lemma:AR_V0_V1}.
Now, consider the subgraph $G[V_1]$ of $G$. Clearly, any stable matching of $G$ is a stable matching of $G[V_1]$. Moreover, a stable matching of $G[V_1]$ will be a stable matching of $G$ as long as none of the edges with an endpoint in $V_0$ are blocking, meaning that any vertex $u$ which has a neighbor $w \in V_0$ is matched to someone they prefer over $w$. We thus obtain the following observation.

\begin{theorem}[Reducing to the perfect stable matching case]
		Given $\SMP G$, let $V^0$ and $V^1$ be as in Lemma \ref{lemma:AR_V0_V1}. 
		Let
		$$E^1 = \left\{ uv \in E: \begin{aligned} &u,v\in V^1 \\ &\forall \, w \in V^0, \;uw \not\in E \text{ or } v \pref u w \\ &\forall \, w \in V^0, \;vw \not\in E \text{ or } u \pref v w \end{aligned} \right\}$$
		and  $G^1 = (V^1,E^1)$. Then $M$ is a stable matching of $\SMP G$ if and only if it is a perfect stable matching of $\SMP {G^1}$.
		\label{thm:reduce_to_perfect_matchings}
\end{theorem}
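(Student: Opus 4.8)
The plan is to prove the equivalence directly, in both directions, using Lemma~\ref{lemma:AR_V0_V1} to control which vertices are saturated and the explicit description of $E^1$ to control which edges can block. Three observations make the rest routine. First, $E^1\subseteq E$ and every edge of $E^1$ has both endpoints in $V^1$, so any matching $M\subseteq E^1$ is automatically a matching of $G$ that saturates only vertices of $V^1$; conversely, by Lemma~\ref{lemma:AR_V0_V1}, every stable matching of $\SMP G$ saturates exactly $V^1$. Second, since $\prec^{G^1}$ is just the restriction of $\prec$ and every preference comparison arising below is between two vertices that are neighbours in $G^1$, the relation ``$a\pref{v}b$'' has the same meaning in $\SMP G$ and in $\SMP{G^1}$; hence an edge of $G^1$ blocks a \emph{perfect} matching $M$ in $\SMP{G^1}$ if and only if it blocks $M$ in $\SMP G$. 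Third, $\SMP G$ admits at least one stable matching (this is implicit in the hypothesis that $V^0,V^1$ partition $V$ as in Lemma~\ref{lemma:AR_V0_V1}), so if two vertices of $V^0$ were adjacent the edge between them would block that matching; thus no edge of $E$ has both endpoints in $V^0$.

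For the forward direction, let $M$ be a stable matching of $\SMP G$. By the first observation $M$ is a perfect matching of $G[V^1]$, so it suffices to show $M\subseteq E^1$ and that $M$ is stable in $\SMP{G^1}$. If some $uv\in M$ were not in $E^1$, one of the two side conditions defining $E^1$ would fail, say there is $w\in V^0$ with $uw\in E$ and $w\pref{u}v$; then $w$ is unsaturated by $M$ while $M(u)=v$, so $uw$ is a blocking edge for $M$ in $G$, a contradiction. Hence $M$ is a perfect matching of $G^1$, and by the second observation any edge of $G^1$ blocking $M$ in $\SMP{G^1}$ would block $M$ in $\SMP G$; so $M$ is a perfect stable matching of $\SMP{G^1}$.

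For the reverse direction, let $M$ be a perfect stable matching of $\SMP{G^1}$. Then $M\subseteq E^1\subseteq E$ is a matching of $G$ saturating exactly $V^1$, and we must show no edge $uv\in E$ blocks $M$. By the third observation at most one endpoint lies in $V^0$. If $u\in V^0$ and $v\in V^1$, then $v$ is saturated, the edge $vM(v)$ lies in $E^1$, and the side condition at $v$ applied to $w=u$ gives $M(v)\pref{v}u$, so $uv$ does not block. If $u,v\in V^1$ and $uv\in E^1$, then $uv$ does not block $M$ in $\SMP{G^1}$, hence not in $\SMP G$. Finally, if $u,v\in V^1$ but $uv\notin E^1$, some side condition fails, say there is $w\in V^0$ with $uw\in E$ and $w\pref{u}v$; since $uM(u)\in E^1$, the side condition at $u$ applied to this $w$ gives $M(u)\pref{u}w$, and transitivity of $\prec_u$ yields $M(u)\pref{u}v$, so $uv$ does not block. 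Therefore $M$ is stable in $\SMP G$.

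The only step that uses the structure of $E^1$ beyond the containment $E^1\subseteq E$ is the last case of the reverse direction: one must see that excluding $uv$ from $E^1$ because of a ``bad'' $V^0$-neighbour $w$ of $u$ forces every possible $G^1$-partner of $u$ to be preferred to $w$, and hence — since $u$ prefers $w$ to $v$ — to be preferred to $v$. This is the crux. The accompanying subtlety is that $E^1$ is defined one endpoint at a time, so one must explicitly treat the symmetric failure (a bad $V^0$-neighbour of $v$) by the same argument with the roles of $u$ and $v$ swapped; everything else is bookkeeping that I would compress.
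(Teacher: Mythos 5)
Your proof is correct and follows the same route the paper sketches informally just before stating the theorem (the paper gives no formal proof of this statement, only a two-sentence remark). Your handling of the case $u,v\in V^1$ with $uv\notin E^1$ --- deducing $M(u)\prec_u$-dominates $w$ from $uM(u)\in E^1$ and then $M(u)\prec_u v$ by transitivity --- correctly supplies the one step the paper's sketch leaves implicit, namely that the edges of $G[V^1]$ excluded from $E^1$ cannot block a perfect stable matching of $G^1$.
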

Observe that, by Lemma \ref{lemma:AR_V0_V1}, if an instance $\SMP G$ has a perfect stable matching, then all stable matchings of $\SMP G$ are perfect. Thus, from now on, we consider without loss of generality only instances which have at least one stable matching, and all the stable matchings are perfect.

In our analysis we will rely on a useful characterization of the vertices of $FSM \SMP G$  introduced by Chen et al. \cite{chen2012}. This characterization is based on the notion of semi-stable partitions, which we define below. A cycle $C=v_0v_1 \cdots v_{k-1} v_0$ of $G$ is said to have a \emph{cyclic preference} if $v_{i-1} \prec_{v_i} v_{i+1}$ for all $0 \leq i \leq k-1$ or $v_{i+1} \prec_{v_i} v_{i-1}$ for all $ 0 \leq i \leq k-1$, where all indices are taken modulo $k$. 
\begin{definition}[Semi-stable partitions.]\label{def:semi} Given $\SMP G$ let $\mathcal C = \{C_1,\dots,C_k\}$ be a set where
	\begin{itemize}
		\item each $C_i$ is an edge of $G$ or a cycle in $G$ with cyclic preferences in $\SMP G$,
		\item each vertex $v \in V$ is contained in precisely one $C_i$,
	\end{itemize}
	and let $x_{\mathcal C} \in \R^{E}$ be defined as
	$$(x_{\mathcal C})_e = 
	\begin{cases} 1 & \text{ if } e \text{ is an edge in } \mathcal C \\ 
	\frac{1}{2} & \text{ if } e \text{ is a part of a cycle } C \text{ in } \mathcal C \\
	0 & \text{ otherwise } \end{cases} $$
Then  $\mathcal C$ is a \emph{semi-stable partition} if $x_\mathcal C \in \SMP G$. 
\end{definition}
In other words, for $\mathcal C$ to be a semi-stable partition, $\mathcal C$ must be a collection of edges and cycles with cyclic preferences, such that the point $x$ assigning value $1$ to the single edges and $\frac{1}{2}$ to the edges on the cycles is a feasible point of $FSM \SMP G$. Chen et al. \cite{chen2012} show that if $x$ is a vertex of $FSM \SMP G$, then $x = x_\mathcal C$ for some semi-stable partition $\mathcal C$ .
	
	\section{Identifying redundant edges}
	\label{section:identifying_redudant_edges}
	
	Chen et al. \cite{chen2012} prove that the instances $\SMP G$, for which the polytope $FSM \SMP G$ is integral, are precisely those, for which the graph $G_I$ obtained from phase one of Irving's algorithm is bipartite. Thus, the polytope $FSM \SMP G$ can be used to solve the weighted stable matching problem only for this class of instances. The edges deleted in phase one are, in a sense, redundant for the instance. By this we mean that none of these edges appear in any stable matching of $G$, and also by deleting them, the set of stable matchings of the instance stays unchanged, i.e. $\SMP G$ and $\SMP {G_I}$ have the same set of stable matchings. However, it could still be the case that $G_I$ contains edges that never appear in any stable matching of $\SMP G$. This suggests the one could improve upon phase one of Irving's algorithm by further searching for edges that never appear in a stable matching and whose removal preserves the set of stable matchings. 
	
	In general, when removing an edge $e$ from a graph $G$ that is never a part of a stable matching, the set of stable matchings either stays the same, or gets larger. All the original stable matchings of $\SMP G$ are still stable matchings of $(G \setminus e,\prec)$; however, it can happen that a matching $M'$ is stable in $(G \setminus e,\prec)$, but is not stable in $\SMP G$, because $e$ is a blocking edge for $M'$ in $\SMP G$. Thus our goal is to identify edges $e$ that are not in any stable matching, and whose removal does not introduce any extra stable matchings. As a motivation for this approach, consider the following example. 
	
	\begin{example}
		\label{ex:diff_between_phase1_and_alg}
		Let the preferences be as follows, where each line $v|v_1 \dots v_k$ means $v_1 \pref v \dots  \pref v v_k$
			\begin{center}
			\begin{tabular}{c|ccccc}
				1 & 3 & 4 & 5 & 2 \\
				2 & 1 & 4 & 3 & 5 & 6 \\
				3 & 5 & 6 & 1 & 2 \\
				4 & 5 & 6 & 1 & 2 \\
				5 & 1 & 2 & 3 & 4 \\
				6 & 2 & 3 & 4 
			\end{tabular}
		    \end{center}
During phase one, only edges 12, 23 and 45 get eliminated. Figure \ref{fig1} depicts the original graph $G$ on the left and the graph $G_I$ in the middle. The red numbers represent the ranks of the edges by their respective endpoints.
\begin{figure}[ht!]
\begin{center}
	\subfloat{\includegraphics[scale=0.33]{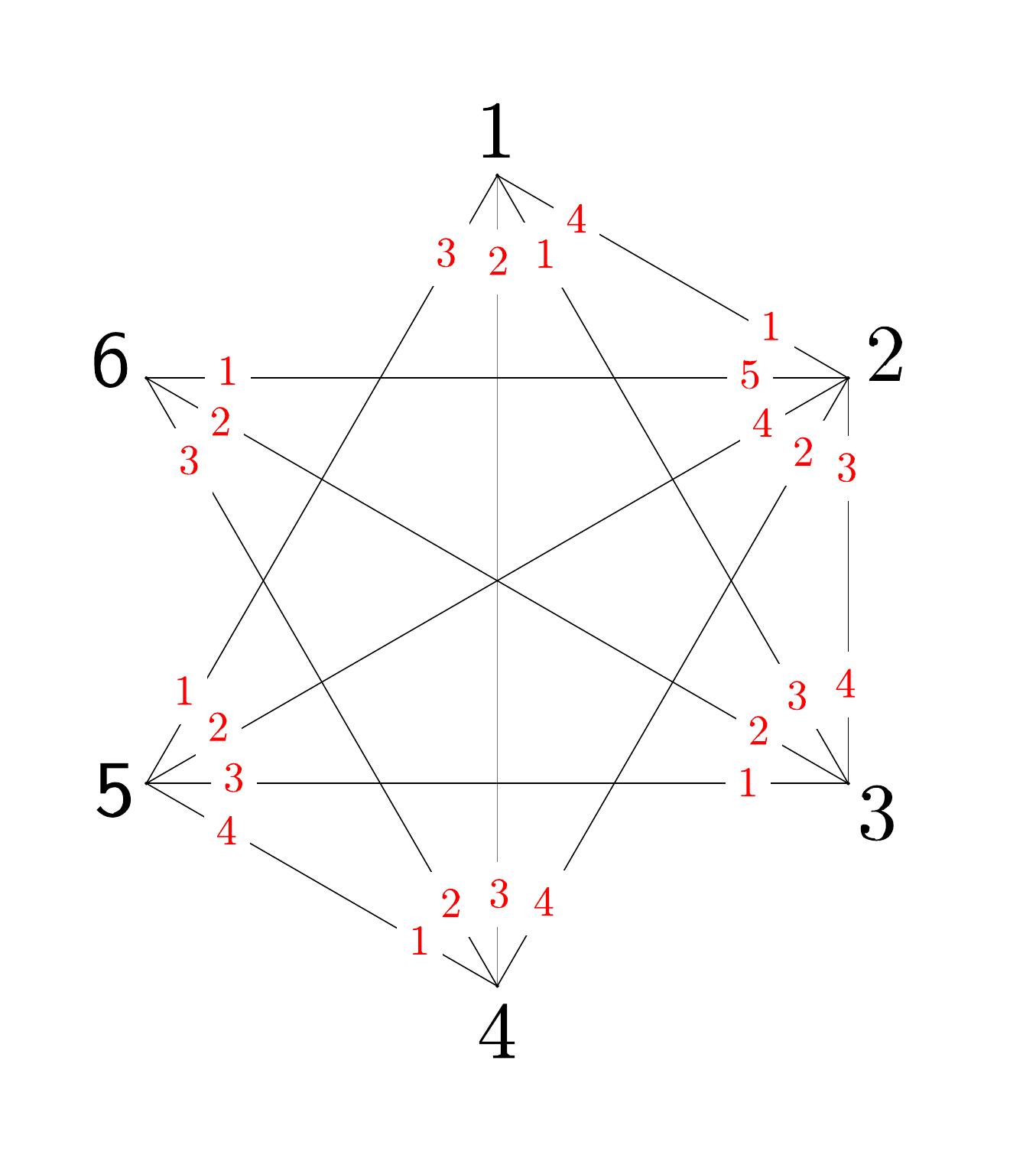}}
	\hspace{10pt} 
	\subfloat{\includegraphics[scale=0.33]{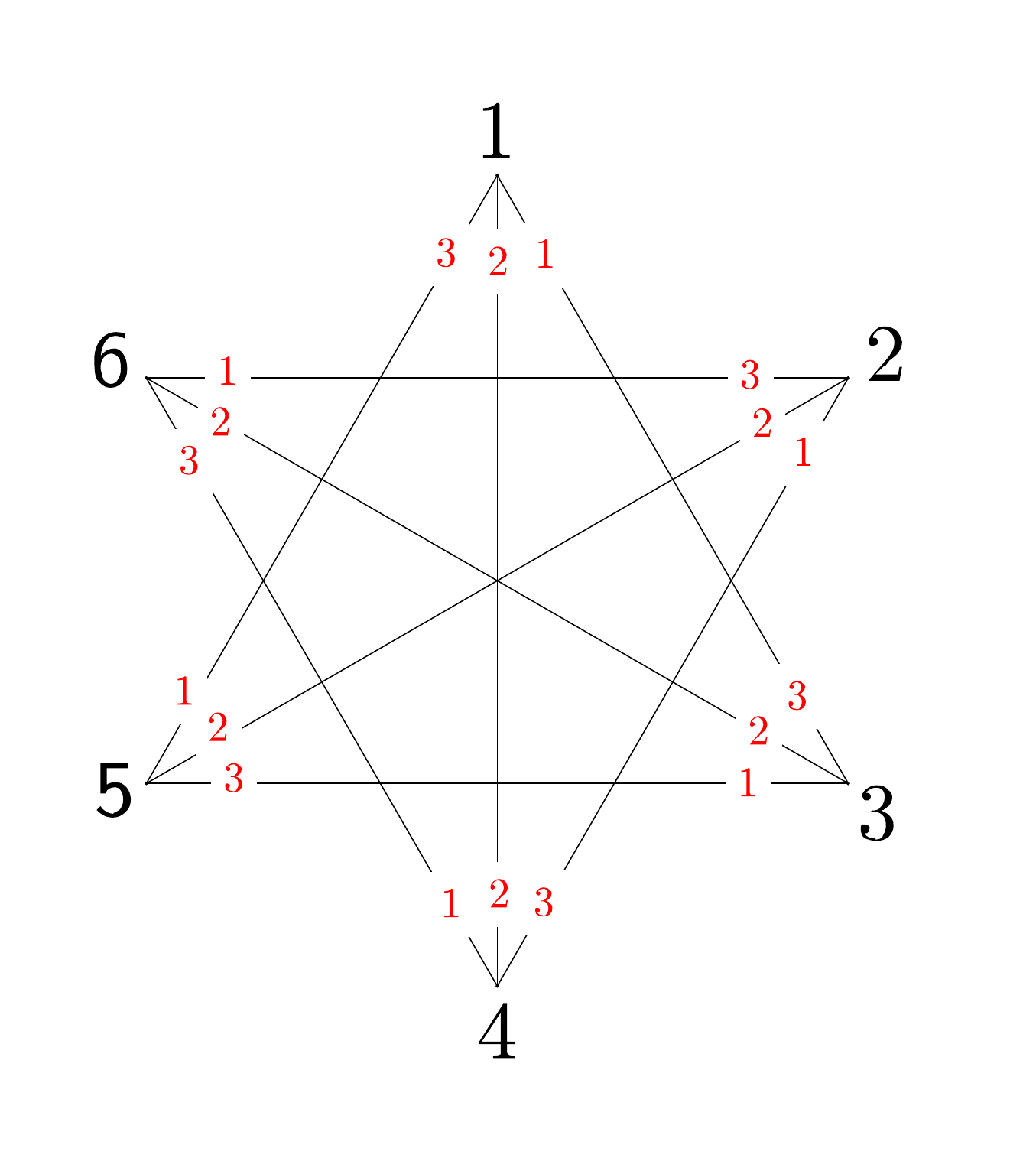}}
	\hspace{10pt} 
	\subfloat{\includegraphics[scale=0.33]{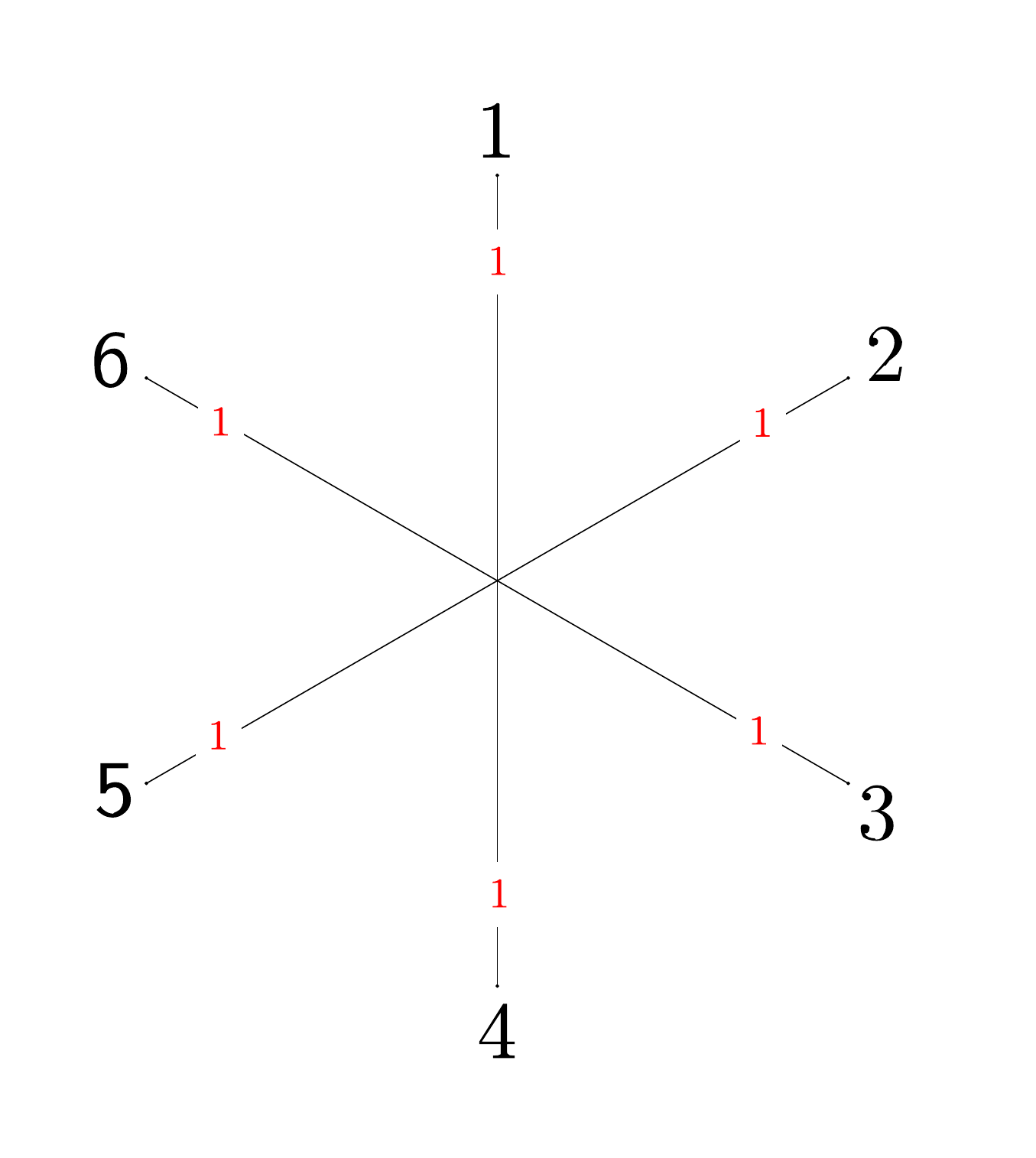}}
\end{center}
\caption{The graphs $G$, $G_I$ and $H$ from Example \ref{ex:diff_between_phase1_and_alg}. } 
\end{figure}\label{fig1}
It is easy to check that $FSM \SMP {G} = conv\{x,y\}$, where 
\begin{align*}
x_e &= \begin{cases} 1 & e \in \{ 14, \, 25, \, 36 \} \\ 0 & \text{ otherwise } \end{cases}  \quad \quad \text{and} \quad \quad 
y_e = \begin{cases} \frac{1}{2} & e \in \{ 13, \, 35, \, 15, \, 24, \, 46, \, 26 \} \\ 0 & \text{ otherwise } \end{cases} 
\end{align*}
The edge 12 is never a part of any stable matching, and, at the same time, the worst (remaining) option for agent 1. By removing this edge we don't introduce any new stable matchings, since any stable matching $M$ in $\SMP {G \setminus \{12\}}$ is perfect, and $1$ has a strictly better partner in $M$ than 2, thus $M$ is stable in $\SMP G$, too. By repeating the above argument, we can keep reducing our graph by eliminating edges 12, 26, 32, 42, 54, 64, 15, 31 and 53 in turn, while always preserving the same set of stable matchings. The resulting graph $H$ consists of precisely the three edges 14, 25, 36 that form a stable matching - the only stable matching of the original instance $\SMP G$. 

\end{example}
This example illustrates that there are cases where even though $G_I$ is not bipartite and $FSM \SMP G$ is not integral, one can still identify redundant edges in order to arrive at a bipartite subgraph, whose set of stable matchings is the same as the original instance meaning that $wSMP$ can be solved in polynomial time. Our goal is to develop a polynomial-time algorithm that finds such a subgraph whenever it exists.

We start by showing that it is possible to identify the set of edges that appear in \emph{a} stable matching of $\SMP G$ in polynomial time using Irving's algorithm.
	
	\begin{definition}
		Given $\SMP G$, let $$E_M = \{e \in E: \exists \text{ stable matching } M \text{ s.t. } e \in M\}$$ be the set of edges of $G$ that ever appear in a stable matching of $G$. 
		Further, let $\overline H = (V, E_M)$ be the graph consisting only of the edges in $E_M$.
		\label{def:EM}
	\end{definition}

The graph $\overline{H}$ is a subgraph of the graph $G_I$, and it shares the following useful property with $G_I$. 
			
	\begin{lemma}
		Let $\SMP G$ be given and  $uv \in E_M$. Then $u = f_{\overline H}(v) \Leftrightarrow v = l_{\overline H} (u)$.
		\label{lemma:best_worst_edges_in_EM}
	\end{lemma}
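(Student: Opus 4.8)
The plan is to exploit the well-known analogous property of the graph $G_I$ produced by phase one of Irving's algorithm, namely that for every edge $uv$ surviving phase one, $u$ is the most-preferred neighbour of $v$ in $G_I$ if and only if $v$ is the least-preferred neighbour of $u$ in $G_I$. Since $\overline H$ is a subgraph of $G_I$ (and $E_M \subseteq E(G_I)$), the preference order restricted to $\overline H$ is a sub-order of the one restricted to $G_I$, so $f_{\overline H}(v)$ and $l_{\overline H}(u)$ are the ``restrictions'' of $f_{G_I}(v)$ and $l_{G_I}(u)$. The difficulty is that restricting to a subgraph can in principle change which neighbour is first or last, so a direct transfer is not immediate; one has to argue that the top of $v$'s list in $\overline H$ and the bottom of $u$'s list in $\overline H$ still ``meet'' exactly on edges of $E_M$.

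First I would prove the forward direction. Suppose $u = f_{\overline H}(v)$, i.e. $v$ prefers $u$ to every other neighbour it has in $\overline H$. Take the stable matching $M$ with $uv \in M$ (which exists since $uv \in E_M$). I claim $v = l_{\overline H}(u)$. Suppose not: then $u$ has some neighbour $w$ in $\overline H$ with $v \pref u w$, and $w \neq M(u) = v$. Since $uw \in E_M$, there is a stable matching $M'$ with $uw \in M'$. Now consider the edge $uv$: at $u$ we have $w = M'(u) \prec_u v$ wait --- $u$ prefers $w$ to $v$, so $u$ is not better off at $v$; so $uv$ need not block $M'$ from $u$'s side. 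Instead I would look at the edge $uw$ in $M$. In $M$, agent $v$ is matched to $u$, which it prefers to all other $\overline H$-neighbours, in particular to $w$; hmm, $w$ need not be an $\overline H$-neighbour of $v$. The cleaner route: use the classical fact that stable matchings form a lattice and that the man-optimal / woman-optimal extremes coincide with the first/last choices --- but since $G$ is non-bipartite I instead invoke the structure of $G_I$ directly. Concretely, I would cite that in $G_I$ the relation $u = f_{G_I}(v) \iff v = l_{G_I}(u)$ holds (this is standard, and stated implicitly when the paper says $\overline H$ ``shares the following useful property with $G_I$''), and then show $f_{\overline H}(v) = f_{G_I}(v)$ and $l_{\overline H}(u) = l_{G_I}(u)$ whenever these edges lie in $E_M$.

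For that last reduction, the key step --- and the one I expect to be the main obstacle --- is showing that the top choice of $v$ within $G_I$ actually belongs to $E_M$, so that it survives into $\overline H$ and remains the top choice there; and symmetrically for the bottom choice of $u$. I would argue this via the rotation / elimination structure of phase two of Irving's algorithm, or more directly: if $ab$ is an edge of $G_I$ that lies in no stable matching, then in any stable matching $M$ the endpoint $a$ is matched above $b$ or $b$ is matched above $a$ on the respective lists; combined with the $G_I$-property $f_{G_I}(\cdot) \iff l_{G_I}(\cdot)$, one deduces that neither $f_{G_I}(v)$ nor $l_{G_I}(u)$ can be such a ``dead'' edge when the opposite endpoint condition forces it into some stable matching. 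Once $f_{\overline H}(v) = f_{G_I}(v)$ and $l_{\overline H}(u) = l_{G_I}(u)$ are established, the lemma follows immediately from the corresponding equivalence for $G_I$, and the backward direction is symmetric by swapping the roles of $f$ and $l$ (equivalently, reversing all preference lists). I would present the proof in this order: (1) recall the $G_I$-property; (2) show the relevant first/last edges survive into $\overline H$; (3) conclude equality of the first/last neighbours in $\overline H$ and $G_I$; (4) transfer the equivalence.
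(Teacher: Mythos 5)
Your proposed reduction to the $G_I$-property does not work, because its key step (3) — that $f_{\overline H}(v) = f_{G_I}(v)$ and $l_{\overline H}(u) = l_{G_I}(u)$ — is false in general. The first (or last) choice of an agent in $G_I$ need not be an edge of $E_M$, and when it is not, restricting to $\overline H$ genuinely changes which neighbour is first or last. The paper's own Example 1 is a counterexample: there $E_M = \{14, 25, 36\}$, while agent $1$'s list in $G_I$ is $3,4,5$, so $f_{G_I}(1) = 3$ but $f_{\overline H}(1) = 4$; symmetrically $l_{G_I}(4) = 2$ while $l_{\overline H}(4) = 1$. So from the hypothesis $u = f_{\overline H}(v)$ you cannot conclude $u = f_{G_I}(v)$, and the $G_I$-equivalence never gets applied to the right edge. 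Your attempted patch ("$f_{G_I}(v)$ cannot be a dead edge when the opposite endpoint condition forces it into some stable matching") is exactly the claim that fails here.

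Two further points. First, you actually began the correct direct argument and abandoned it because of a sign confusion in the notation: in this paper $a \pref{v} b$ means $v$ \emph{prefers} $a$ over $b$, so if $u = f_{\overline H}(v)$ and $u$ had an $E_M$-neighbour $w$ with $v \pref{u} w$, then in a stable matching $M'$ containing $uw$ the agent $u$ prefers $v$ to $M'(u)=w$, and $v$ prefers $u$ to $M'(v)$ (since $M'$ is perfect, $vM'(v) \in E_M$, $M'(v) \neq u$, and $u$ tops $v$'s list in $\overline H$); hence $uv$ blocks $M'$, a contradiction. That is the paper's entire forward direction. Second, the backward direction is \emph{not} "symmetric by reversing preference lists" — reversing preferences destroys the set of stable matchings, and the analogous direct blocking argument genuinely fails in that direction. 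The paper instead derives $(\Leftarrow)$ from $(\Rightarrow)$ by a counting argument: $(\Rightarrow)$ makes the map $v \mapsto f_{\overline H}(v)$ injective, hence a bijection on $V$, which forces $l_{\overline H}(u)$ to be the unique preimage of $u$. You need some such argument; symmetry alone is not a proof.
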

We remark however, that while $(G_I, \prec)$ keeps the same stable matchings as $(G, \prec)$, the set of stable matchings of $(\overline{H}, \prec)$ can be a strict superset of the set of stable matchings of $(G, \prec)$. 

The following lemma shows that we can decide in polynomial time whether a given edge $e \in E$ belongs to the set $E_M$. 

%
	
	\begin{lemma}
		Given an instance $\SMP G$ and an edge $uv \in G$, there exists a stable matching containing $uv$
		 if, and only if, $\SMP {G' = (V', E')}$ contains a stable matching, where 
		\begin{eqnarray*}
			V' &=& V \setminus \{u,v\} \\
			E_u &=& \{wy \in G[V']: uw \in E, w\pref u v, u \pref w y \}  \\
			E_v &=& \{wy \in G[V']: vw \in E, w \pref v u, v \pref w y \} \\
			E' &=& E(G[V']) \setminus E_u \setminus E_v.
		\end{eqnarray*}
		\label{lemma:edges_in_a_SM}
	\end{lemma}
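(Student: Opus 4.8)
The plan is to establish the biconditional by exhibiting an explicit correspondence between stable matchings of $\SMP G$ containing $uv$ and stable matchings of $\SMP{G'}$. For the forward direction, suppose $M$ is a stable matching of $\SMP G$ with $uv \in M$. I would take $M' = M \setminus \{uv\}$ and argue that $M'$ is a stable matching of $\SMP{G'}$. It is clearly a matching of $G'$ (no edge of $M'$ touches $u$ or $v$, so all its edges survive in $E(G[V'])$; I must also check none of them lies in $E_u \cup E_v$). The key point is stability: any edge $e \in E'$ that blocks $M'$ in $\SMP{G'}$ would also block $M$ in $\SMP G$, because $e$ does not involve $u$ or $v$ and the preferences are just restrictions. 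Since $M$ is stable in $\SMP G$, no such $e$ exists. The removal of $E_u$ and $E_v$ is exactly what guarantees that edges of $M'$ are not among the forbidden ones: if $wy \in M'$ with $uw \in E$ and $w \pref u v$, then $w$ prefers $u$ to its $M$-partner (since $u$'s partner in $M$ is $v$ and $w \pref u v$ means... actually one must be careful about whose preference this is), so $uw$ would block $M$ in $G$ unless $u \pref w y$ fails — giving a contradiction with stability of $M$. This is the direction where the precise definitions of $E_u$ and $E_v$ must be matched up carefully with the blocking-edge condition.

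For the reverse direction, suppose $M'$ is a stable matching of $\SMP{G'}$. I would set $M = M' \cup \{uv\}$ and show $M$ is stable in $\SMP G$. Since the stable matchings of $\SMP{G'}$ are perfect (by the reduction in Theorem \ref{thm:reduce_to_perfect_matchings}, or because $G'$ has a stable matching so all are perfect by Lemma \ref{lemma:AR_V0_V1}), $M$ matches every vertex of $V$. I then need: no edge of $G$ blocks $M$. Edges entirely within $V'$: such an edge $e$ is either in $E'$, in which case it does not block $M'$ hence does not block $M$; or it lies in $E_u \cup E_v$, say $e = wy \in E_u$, meaning $uw \in E$, $w \pref u v$, and $u \pref w y$. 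Then $y = M(w) = M'(w)$, and $u \pref w y$ says $w$ prefers its current partner $y$ to $u$, so $e=wy$ cannot block via $w$; hence $e$ does not block $M$. Edges incident to $u$ (symmetrically $v$): an edge $uw$ with $w \neq v$. If $w \pref u v$ then $w \in$ some $E_u$-relevant position; here $u$'s partner is $v$ and $u$ prefers $w$ to $v$, so for $uw$ not to block we need $w$ to prefer $M(w)$ over $u$ — and this is precisely ensured because the edge $wM(w) \in E_u$ was removed only when $u \pref w {M(w)}$, so if $w M(w)$ survived in $M' \subseteq E'$ we must have $w$ preferring $M(w)$ to $u$. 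If instead $v \pref u w$, then $u$ prefers its partner $v$ to $w$, so $uw$ does not block via $u$. Finally the edge $uv$ itself is in $M$, so trivially not blocking.

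The main obstacle, as flagged above, is the bookkeeping in matching the set-builder definitions of $E_u, E_v$ against the blocking-edge condition — in particular tracking which vertex each $\pref{}$ refers to and handling the asymmetry that $u$ is matched to $v$ (a fixed, known partner) while $w$'s partner is whatever $M'$ assigns. There is also a subtlety to verify at the outset: that $G'$ is well-defined and that applying Lemma \ref{lemma:AR_V0_V1} / the perfectness reduction to $G'$ is legitimate, i.e. "$\SMP{G'}$ contains a stable matching" is the right hypothesis to feed into the perfectness machinery. Once the forward direction's contradiction argument is set up cleanly, the reverse direction is a case analysis over edge types (inside $V'$, incident to $u$, incident to $v$, equal to $uv$) with each case resolved by the definition of $E'$; I do not expect any case to require more than unwinding definitions.
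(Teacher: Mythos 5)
Your overall strategy --- the explicit correspondence $M \leftrightarrow M \setminus \{uv\}$ between stable matchings of $\SMP G$ containing $uv$ and stable matchings of $\SMP{G'}$ --- is the same as the paper's, and your forward direction is essentially sound. The reverse direction, however, has two genuine problems. First, the perfectness of $M'$ cannot be derived the way you propose: Lemma \ref{lemma:AR_V0_V1} only says that all stable matchings of $\SMP{G'}$ match the \emph{same} set of vertices, not that this set is all of $V'$, and Theorem \ref{thm:reduce_to_perfect_matchings} is a statement about the original instance, not about the derived instance $G'$ (a vertex of $V'$ may well be isolated or unmatchable in $G'$). Perfectness is not a technicality here: if some $w$ with $uw \in E$ and $w \pref{u} v$ is unmatched by $M'$, then $uw$ blocks $M' \cup \{uv\}$, and your argument for edges incident to $u$ collapses because it needs a partner $M'(w)$ whose excluded edge $wM'(w) \in E_u$ certifies $M'(w) \pref{w} u$. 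The paper's proof avoids this by proving the equivalence with the existence of a \emph{perfect} stable matching of $\SMP{G'}$ (which is what Irving's algorithm tests); that is the version of the hypothesis you must work with.

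Second, your case of an edge $wy \in E_u$ with both endpoints in $V'$ is wrong on two counts: $y$ is not $M(w)$ --- membership in $E_u$ forces $wy \notin E'$, hence $wy \notin M'$ and $y \neq M'(w)$ --- and $u \pref{w} y$ means $w$ prefers $u$ \emph{over} $y$, not the reverse. The correct argument routes through the edge $uw$: once you know $uw$ does not block $M' \cup \{uv\}$ (because otherwise $wM'(w)$ would lie in $E_u$, contradicting $M' \subseteq E'$), and since $w \pref{u} v$ holds by definition of $E_u$, you conclude $M'(w) \pref{w} u$; combined with $u \pref{w} y$ and transitivity this gives $M'(w) \pref{w} y$, so $wy$ cannot block at $w$. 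As written, your case analysis does not close.
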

In particular, this implies that the set of edges that are part of a stable matching can be identified in polynomial time. 	\footnote{ A more general version of the problem of deciding whether a given set of edges is contained in a stable matching was considered by Dias et al \cite{dias2003}.}

As already mentioned, by removing an edge $e \in E \setminus E_M$ from a graph $G$, we can introduce new stable matchings. The following lemma tells us that we can check whether this is the case or not. 
	
	\begin{lemma}
		Given $\SMP G$, let $E_M$ be defined as above and let $e \in E \setminus E_M$. 
		Then it can be checked in polynomial time whether the set of stable matchings of $\SMP G$ and the set of stable matchings of $\SMP {G \setminus e}$ are the same.
		\label{lemma:chceck_removing_edge_no_effect}
	\end{lemma}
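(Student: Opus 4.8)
The plan is to determine exactly how the set of stable matchings of $\SMP G$ can differ from that of $\SMP{G\setminus e}$, and then to reduce the test to a polynomial number of queries of the form ``is this two-edge matching contained in some stable matching?'', which we already know how to decide.

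Since $e\notin E_M$ we have $e\notin M$ for every stable matching $M$ of $\SMP G$, so every such $M$ is also a stable matching of $\SMP{G\setminus e}$; in particular $\SMP{G\setminus e}$ has a perfect stable matching, and hence by Lemma \ref{lemma:AR_V0_V1} all of its stable matchings are perfect. As already observed, the set of stable matchings of $\SMP G$ is contained in that of $\SMP{G\setminus e}$, so the two coincide if and only if every stable matching $M'$ of $\SMP{G\setminus e}$ is also stable in $\SMP G$. Fix such an $M'$ (note $M'\subseteq E\setminus\{e\}$ is a matching of $G$ as well). For an edge $f=xy\in E\setminus\{e\}$, whether $f$ blocks $M'$ is the same in $\SMP G$ and in $\SMP{G\setminus e}$, as this depends only on $f$, on the matching edges $xM'(x)$ and $yM'(y)$, and on the preferences of $x$ and $y$ among these, none of which is $e$ (indeed $f\neq e$ and $uv\notin M'$). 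Since $M'$ is stable in $\SMP{G\setminus e}$, no edge of $E\setminus\{e\}$ blocks it in $\SMP G$ either, so $M'$ fails to be stable in $\SMP G$ if and only if $e=uv$ is a blocking edge for $M'$; since $M'$ is perfect and $uv\notin M'$, this happens precisely when $v\pref u M'(u)$ and $u\pref v M'(v)$.

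Therefore the two sets of stable matchings differ if and only if there exist a neighbor $w$ of $u$ in $G\setminus e$ with $v\pref u w$, and a neighbor $z$ of $v$ in $G\setminus e$ with $u\pref v z$, such that $\{uw,vz\}$ is contained in some stable matching of $\SMP{G\setminus e}$. (These conditions force $w\neq v$ and $z\neq u$; restricting attention to pairs with $w\neq z$, the set $\{uw,vz\}$ is then a matching.) The number of candidate pairs $(w,z)$ is at most $|\delta(u)|\cdot|\delta(v)|$, which is polynomial in the input size, and for each of them we can decide in polynomial time whether $\{uw,vz\}$ extends to a stable matching of $\SMP{G\setminus e}$ --- either by invoking the algorithm of Dias et al.\ \cite{dias2003}, or by two applications of Lemma \ref{lemma:edges_in_a_SM}: first use it to test whether $uw$ lies in a stable matching of $\SMP{G\setminus e}$ and, if so, to pass to the associated subinstance $\SMP{G''}$ on $V\setminus\{u,w\}$; if the edge $vz$ has been deleted in forming $G''$ then no stable matching of $\SMP{G\setminus e}$ contains both $uw$ and $vz$ (by construction such a deleted edge $vz$ is a blocking edge of every matching that also contains $uw$), and otherwise apply the lemma once more to $\SMP{G''}$ and $vz$. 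Declaring the two sets equal exactly when no pair $(w,z)$ passes this test yields a polynomial-time algorithm.

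The only genuinely delicate point is the characterization obtained above --- the observation that a stable matching of $\SMP{G\setminus e}$ which is not stable in $\SMP G$ must be blocked by $e$, and by nothing else. Once this is in place, the algorithm is a routine combination of enumeration with the edge-forcing subroutine already at our disposal.
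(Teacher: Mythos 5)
Your proof is correct and follows essentially the same route as the paper's: both reduce the question to whether some pair of partners $(w,z)$ with $v\prec_u w$ and $u\prec_v z$ extends to a stable matching of $(G\setminus e,\prec)$, enumerate the $O(n^2)$ such pairs, and decide each via the forced-pair subroutine of Dias et al. Your explicit justification that such a matching can be blocked only by $e$ itself, and the alternative test via two applications of Lemma~\ref{lemma:edges_in_a_SM}, are welcome refinements but do not change the argument.
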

To see why Lemma \ref{lemma:chceck_removing_edge_no_effect} is true, 
	observe that for a matching $M'$ to be stable in $\SMP {G \setminus \{uv\}}$, but not in $\SMP G$,both $u$ and $v$ must prefer each other over their respective partners in $M'$. For each such an assignment of partners for $u$ and $v$, we can check whether it can be extended to a perfect stable matching in $\SMP {G \setminus \{uv\}}$ as in \cite{dias2003}. 
	
	Note in particular, that if an edge $e$ is the worst choice for one of its endpoints, then it can never be blocking for any perfect matching. If in addition $e \in E \setminus E_M$ then removing it keeps the set of stable matchings the same. We thus have the following lemma, whose proof is in the Appendix.
	
	\begin{lemma}
		Let $\SMP G$ be given and let an edge $e = uv \in E \setminus E_M$ be such that $u = l_G(v)$. 
		Then $\SMP G$ and $(G \setminus e,\prec)$ have the same set of stable matchings.
		\label{lemma:worst_edges_redundant}
	\end{lemma}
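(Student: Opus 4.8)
The plan is to show directly that the set of stable matchings is preserved, using the two facts already isolated in the text: (a) every stable matching of $\SMP G$ is perfect (by Lemma~\ref{lemma:AR_V0_V1} and the reduction of Theorem~\ref{thm:reduce_to_perfect_matchings}, since we work with instances having a perfect stable matching), and (b) $e = uv \in E \setminus E_M$, so $e$ lies in no stable matching of $\SMP G$. One inclusion is immediate: any stable matching $M$ of $\SMP G$ avoids $e$ by (b), and removing an edge from the graph can only destroy blocking edges, not create them, so $M$ remains stable in $(G \setminus e, \prec)$. The work is in the reverse inclusion.

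For the reverse inclusion, suppose $M'$ is a stable matching of $(G \setminus e, \prec)$; I want to show $M'$ is stable in $\SMP G$. The only edge of $G$ not present in $G \setminus e$ is $e = uv$, so the only possible blocking edge for $M'$ in $\SMP G$ that is not already a blocking edge in $(G \setminus e, \prec)$ is $e$ itself. Hence it suffices to show that $e$ is not blocking for $M'$ in $\SMP G$. Here I invoke the hypothesis $u = l_G(v)$: $v$'s least preferred neighbour in $G$ is $u$. Since $M'$ is a stable matching of $(G \setminus e, \prec)$, by the discussion following Theorem~\ref{thm:reduce_to_perfect_matchings} it is perfect, so $v$ is matched in $M'$ to some neighbour $M'(v) \neq v$. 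Because $u = l_G(v)$ and $M'(v) \neq u$ (as $uv = e \notin G \setminus e$, so $uv \notin M'$), we have $M'(v) \pref v u$; that is, $v$ strictly prefers its partner $M'(v)$ to $u$. Therefore $v$ does not prefer $u$ to its current partner, so the edge $e = uv$ fails the blocking condition at $v$, and $e$ is not a blocking edge for $M'$ in $\SMP G$. Thus $M'$ is stable in $\SMP G$, which completes the reverse inclusion.

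Combining the two inclusions, $\SMP G$ and $(G \setminus e, \prec)$ have exactly the same set of stable matchings. The only subtlety — and the single point one must be careful about — is the appeal to perfectness of $M'$: this is why the standing assumption that all stable matchings are perfect (equivalently $V^0 = \emptyset$, guaranteed by Theorem~\ref{thm:reduce_to_perfect_matchings}) is essential. Without it, $v$ could be unmatched in $M'$, and then $e$ could indeed block. Given that assumption, however, the argument is short and requires no case analysis beyond the one step above; there is no real obstacle.
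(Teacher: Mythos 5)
Your proof is correct and follows essentially the same route as the paper's: the only candidate new blocking edge is $e$ itself, and since all stable matchings of $(G\setminus e,\prec)$ are perfect and $u=l_G(v)$, we get $M'(v)\pref v u$, so $e$ cannot block. Your explicit justification that $(G\setminus e,\prec)$ inherits perfectness of stable matchings (via the forward inclusion and the observation after Theorem~\ref{thm:reduce_to_perfect_matchings}) is a welcome extra step that the paper leaves implicit.
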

The proofs of all the lemmas in this section are deferred to the Appendix. 
	
	\subsection{Algorithm}
	\label{section:removing_redundant_edges}
	
	Instead of applying phase one of Irving's algorithm to a given instance $\SMP G$, we will now use Lemma \ref{lemma:worst_edges_redundant}
	to keep identifying and removing edges from the graph while preserving the set of stable matchings. 
	
	\begin{algorithm}\textbf{Algorithm 1.} 
		Given $\SMP G$, 
		\begin{enumerate}
			\item[1.] Identify the set $E_M$ of edges contained in stable matchings of $\SMP G$.
			\item[2.] While possible:
			\begin{enumerate}
				\item[(i)] Identify an edge $e \in E(G)$ such that
				\begin{itemize}
					\item[$\circ$] $e \not \in E_M$, and
					\item[$\circ$] $e$ is the worst remaining option for one of its endpoints.
				\end{itemize}
				\item[(ii)] Remove $e$ from $G$: set $G \leftarrow G \setminus e$.
			\end{enumerate}
			\item[3.] Return $G$.
		\end{enumerate}
	\end{algorithm}
	
	\begin{definition}
		Given $\SMP G$, let $H$ be the graph obtained from $G$ by applying the Algorithm 1 to it. Denote $E_H$ the set of its edges.
		\label{def:H}
	\end{definition}
First thing to observe is that the graph we obtain from the Algorithm 1 is independent of the order of deletion of the edges. In other words, $H$ is well-defined.
	
	\begin{lemma}
		The graph $H$ is well-defined.
		\label{lemma:H_well_defined}
	\end{lemma}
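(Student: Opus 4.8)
The plan is to prove well-definedness of $H$ by a standard confluence argument: show that the edge-deletion rewriting system is "locally confluent" and then conclude global confluence since the process clearly terminates (the number of edges strictly decreases). Concretely, suppose at some stage of Algorithm 1 we have a graph $K$ and two distinct edges $e$ and $f$ both eligible for deletion, i.e. each is the worst remaining option for one of its endpoints in $K$ and neither lies in $E_M$. I would first observe that the set $E_M$ is computed once, at the start, from the original instance $\SMP G$, so eligibility of an edge with respect to the "$\not\in E_M$" condition never changes as edges are removed. The only thing that can change is whether an edge is still the worst remaining option for one of its endpoints, and this can only happen in one's favour: if $e = uv$ is the worst remaining option for $u$ in $K$, then after deleting any other edge $f \neq e$, $e$ is still present (as $f \neq e$) and is still the worst remaining option for $u$ in $K \setminus f$, since deleting edges only removes competitors. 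Hence once an edge becomes eligible it stays eligible until it is itself deleted.

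Given this monotonicity, local confluence is immediate: if $e$ and $f$ are both eligible in $K$, then after deleting $e$, the edge $f$ is still eligible in $K \setminus e$ (it is still present since $f \neq e$, it is still worst for one of its endpoints by the monotonicity above, and still $\not\in E_M$), and symmetrically $e$ is still eligible in $K \setminus f$. Deleting the other edge in each branch yields the common graph $K \setminus \{e,f\}$. I would also need to check that deleting $e$ preserves the invariant under which the algorithm operates — namely that the current graph has the same set of stable matchings as $\SMP G$ — but this is exactly the content of Lemma~\ref{lemma:worst_edges_redundant}, applied with the current graph in place of $G$, using that $e \not\in E_M$ with respect to $\SMP G$ implies $e$ is in no stable matching of the current graph (whose stable matchings are a subset of, in fact equal to, those of $\SMP G$).

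With local confluence and termination in hand, Newman's lemma gives that the rewriting system is confluent, so every maximal sequence of deletions reaches the same final graph, which is therefore well-defined and equal to $H$. Alternatively, and perhaps cleaner for a self-contained write-up, I would argue directly: let $D$ be the set of all edges ever deleted along some run of the algorithm, and show $D$ is independent of the run. Using the monotonicity observation, one shows that any edge eligible in the original graph $G$ is eventually deleted in every run (it never loses eligibility), and then inducts: the final graph is $G$ minus the "eligibility closure" of $G$ under the rule, a set determined purely by $\SMP G$ and $E_M$.

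The main obstacle is the subtle point about whether the eligibility of $f$ can be destroyed by deleting $e$ — one must be careful that "worst remaining option" is genuinely monotone and that $e \neq f$ guarantees $f$ survives the deletion of $e$; once that is nailed down, the argument is a routine confluence/termination packaging. A secondary point requiring care is that the $E_M$ membership test is anchored to the \emph{original} instance and not recomputed, so that the two deletion conditions interact cleanly; this should be stated explicitly at the outset.
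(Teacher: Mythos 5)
Your proof is correct and rests on exactly the same key observation as the paper's: once an edge is eligible for deletion (not in $E_M$ and worst remaining for an endpoint), it stays eligible after any further deletions, since removing other edges only removes competitors and $E_M$ is fixed at the outset. The paper packages this as a direct contradiction (take a first edge deleted in one run but surviving the other and note it is still deletable there), whereas you package it as local confluence plus termination via Newman's lemma; the substance is identical.
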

The main idea behind the proof of Lemma \ref{lemma:H_well_defined} is that once an edge $e$ can be removed from $G$ in Algorithm 1, it can also be removed from any graph $G'$ otained from $G$ by a sequence of edge removals as in Algorithm 1. Thus $e$ will be removed from $G$ no matter what the order of deletion of the edges is. The full proof of Lemma \ref{lemma:H_well_defined} can be found in the Appendix. 

\newpage
We now enumerate several properties of $H$. The proof is found in the Appendix.
	\begin{theorem}
		Given $\SMP G$, let $E_M$ and $H$. Then
		\begin{enumerate}[(i)]
			\item $E_M \subseteq H$.
			\item $\SMP G$ and $\SMP H$ have the same set of stable matchings. 
			\item $H$ is a subgraph of $G_I$.
			\item $H = H_I$, where $H_I$ is obtained from $H$ by applying phase one to $\SMP H$.
			\item Let $uv \in E_H$. Then $u = f_H(v) \Leftrightarrow v = l_H(u)$. 
			\item Let $v \in V$. Then $vf_H(v) \in E_M$.
			\item Let $uv \in E_H \setminus E_M$. Then $f_H(v) \pref v u \pref v l_H(v)$ and $f_H(u) \pref u v \pref u l_H(u)$.
		\end{enumerate}
		\label{thm:properties_of_H}
	\end{theorem}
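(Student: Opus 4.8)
The plan is to prove the seven items of Theorem~\ref{thm:properties_of_H} roughly in the stated order, leaning on the lemmas from this section, since several items feed into later ones. For~(i), I would argue that an edge $e \in E_M$ is never selected for removal in step~2 of Algorithm~1, because the selection criterion requires $e \notin E_M$, and the set $E_M$ is invariant: removing edges of $E\setminus E_M$ (which by Lemma~\ref{lemma:worst_edges_redundant} preserves the set of stable matchings) cannot create or destroy membership of a surviving edge in $E_M$. Item~(ii) then follows by a telescoping argument: each single deletion step preserves the set of stable matchings by Lemma~\ref{lemma:worst_edges_redundant} (the removed edge is in $E\setminus E_M$ and is the worst remaining option for an endpoint), so by induction on the number of deletions $\SMP G$ and $\SMP H$ have the same stable matchings. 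For~(iii), since Irving's phase one deletes exactly the edges not in any stable matching of the current instance (in the generalized form), and by~(ii) $\SMP H$ has the same stable matchings as $\SMP G$, every edge deleted by Algorithm~1 is deleted by phase one as well; more carefully, I would show that any edge removable by Algorithm~1 is removable in the phase-one proposal sequence, so $H \supseteq$ nothing beyond $G_I$, i.e. $E_H \subseteq E(G_I)$.

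For~(iv), the key point is termination: $H$ is the fixed point of Algorithm~1, so no edge of $E_H \setminus E_M$ is the worst remaining option for either endpoint; combined with~(i) this should force phase one applied to $\SMP H$ to delete nothing. Concretely, phase one deletes an edge $uv$ when, after some proposals, $u$ would be rejected by $v$ in favour of a better proposal; I would show that such a rejected edge would have to be a worst remaining option for $u$ and lie outside $E_M$, contradicting that $H$ is Algorithm~1's output. Item~(v) is then immediate: $H = H_I$ by~(iv), and $G_I$ (hence $H_I$) satisfies the "best/worst" reciprocity $u = f(v) \Leftrightarrow v = l(u)$ — this is exactly the analogue of Lemma~\ref{lemma:best_worst_edges_in_EM} for the phase-one graph, a standard property of Irving's phase one that I would cite or quickly re-derive. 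Item~(vi): for $v \in V$, stability of any stable matching $M$ of $\SMP H$ forces $M(v) \preceq_v f_H(v)$, but $f_H(v)$ is the most preferred neighbour, so $M(v) = f_H(v)$ is consistent; more to the point, I claim $v f_H(v) \in E_M$ because if it were not, then by~(v) $f_H(v) = l_H(u)$ where $u = f_H(v)$... I would instead argue directly: the edge $v f_H(v)$ dominates every other edge at $v$, so $\phi(v f_H(v)) $ is small, and Lemma~\ref{lemma:AR_xe_positive} together with half-integrality of vertices of $FSM\SMP H$ forces $x_{vf_H(v)} > 0$ for some vertex, whence by taking an appropriate semi-stable partition / stable matching one gets $v f_H(v) \in E_M$.

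Finally, item~(vii): for $uv \in E_H \setminus E_M$, I must show $u$ is strictly between $f_H(v)$ and $l_H(v)$ in $v$'s list (and symmetrically). The upper bound $f_H(v) \pref v u$ with strict inequality holds because $u \neq f_H(v)$: if $u = f_H(v)$ then by~(vi) $uv = v f_H(v) \in E_M$, contradiction. The lower bound $u \pref v l_H(v)$ with strict inequality is the real content: if $u = l_H(v)$, then $uv$ is the worst remaining option for $v$, and since $uv \notin E_M$, Algorithm~1 would have removed it, contradicting that $uv \in E_H$. I expect item~(iv) to be the main obstacle, since it requires understanding precisely which edges phase one of Irving's algorithm deletes and arguing that none survive in $H$ that phase one would touch — this needs a careful invariant tying the "worst remaining option outside $E_M$" condition of Algorithm~1 to the rejection dynamics of the proposal sequence, and is where the proof details (deferred to the Appendix) will concentrate.
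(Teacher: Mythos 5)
Your overall plan matches the paper's proof for items (i), (ii), (iv), (v) and (vii), but there are two concrete errors in (iii) and (vi).

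For (iii), your argument is inverted. You claim that ``every edge deleted by Algorithm~1 is deleted by phase one as well'' and conclude $E_H \subseteq E(G_I)$. That implication would actually give the opposite containment ($E \setminus E_H \subseteq E \setminus E(G_I)$ means $E(G_I) \subseteq E_H$), and the claim itself is false: Algorithm~1 in general deletes strictly more edges than phase one (this is the entire point of Example~\ref{ex:diff_between_phase1_and_alg}, where phase one removes only $12, 23, 45$ but Algorithm~1 removes nine further edges). Your premise that phase one deletes \emph{exactly} the edges not in any stable matching is likewise false; it deletes only some of them. The correct direction, which is what the paper argues, is that every edge deleted by phase one is also deleted by Algorithm~1: the phase-one-deleted edges are precisely $\{uv \in E : l_{G_I}(u) \pref u v\}$, none of them lies in $E_M$, and each of them at some point during Algorithm~1 becomes the worst remaining option for one of its endpoints, hence qualifies for removal under Lemma~\ref{lemma:worst_edges_redundant}. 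This yields $H \subseteq G_I$.

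For (vi), the ``direct'' argument you settle on does not work. Having $x_e > 0$ at some vertex of $FSM\SMP H$ (or a positive coordinate in a semi-stable partition) does not imply $e \in E_M$: in Example~\ref{ex:diff_between_phase1_and_alg} the fractional vertex $y$ has support on six edges, none of which belongs to any stable matching. The argument you started and then abandoned is the right one, and is the paper's: by item (v), writing $u = f_H(v)$ we get $v = l_H(u)$, so the edge $v f_H(v)$ is the worst remaining option for $u$ in $H$; if it were not in $E_M$, Algorithm~1 would still be able to remove it, contradicting that $H$ is the algorithm's fixed point. Hence $v f_H(v) \in E_M$. With that repair, your item (vii) goes through as you wrote it.
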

	

	Note in particular that, for each agent, his worst and best remaining options are in $E_M$. That means that, in the algorithm, we were deleting edges that were worse than edges in $E_M$ for one of the endpoints, but by doing this, all the edges that were better for one of the endpoints than edges in $E_M$ got removed as well. 
	
	
Since $\SMP G$ and $\SMP H$ have the same set of stable matchings by construction of $H$, the polytopes $FSM \SMP G$ and $FSM \SMP H$ have exactly the same set of integral vertices. Moreover, if $H$ is bipartite, then Theorem \ref{thm:chen} implies that $FSM \SMP H$ is integral. In general, it can happen that $FSM \SMP G$ is not integral, but $FSM \SMP H$ is, which happens precisely when $G_I$ is not bipartite, but $H$ is. If this is the case, then there exists at least one semi-stable partition $\mathcal C$ such that $x_\mathcal C$ is a fractional vertex of $FSM \SMP G$, and $supp(x) \nsubseteq E_H$. Moreover, no semi-stable partition $\mathcal C$ corresponding to fractional vertex of $FSM \SMP G$ satisfies $E(\mathcal C) \subseteq E_H$.	
	
	\section{Characterizing bipartite reducible graphs}
	\label{section:GI_vs_H}

In the Algorithm 1, we removed \emph{some} edges such that their deletion didn't affect the set of stable matchings. However, it is not guaranteed that this deletion couldn't continue further, or that some other method of eliminating edges from $G$ to obtain a smaller subgraph of $G$ with the same set of stable matching wouldn't yield a graph $K$ whose polytope $FSM \SMP K$ is integral. 

In this section, we will show that if some order of deletion would yield such a $K$, or, in fact, if any other method of obtaining such a $K$ worked, then $H$ would be bipartite and the corresponding polytope would be integral.  Our goal is to prove the following theorem.
	
	\begin{theorem}
		Let $\SMP G$ be given. Then if $H$ is not bipartite, then neither is any subgraph $K$ of $G$ with the property that $\SMP {K}$ and $\SMP G$ have the same set of stable matchings. 
		In other words, $H$ is bipartite if and only if $\SMP G$ is bipartite reducible. 
		\label{thm:existence_of_bipartite_representation}
	\end{theorem}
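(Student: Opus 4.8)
Since $H$ is itself a subgraph of $G$ with $SM\SMP H = SM\SMP G$ by Theorem~\ref{thm:properties_of_H}(ii), the implication ``$H$ bipartite $\Rightarrow \SMP G$ bipartite reducible'' is immediate, so the content is the converse, which I will prove in contrapositive form: if there is a bipartite subgraph $K$ of $G$ with $SM\SMP K = SM\SMP G$ (call such a $K$ \emph{valid}), then $H$ is bipartite. Two facts get used repeatedly: every valid subgraph contains $E_M$ (the edges of any stable matching of $\SMP G$ are edges of $K$), and $SM\SMP H = SM\SMP G$. Since $K$ is bipartite, so is $K_I \subseteq K$, so by Theorem~\ref{thm:chen} the polytope $FSM\SMP K$ is integral and hence equals the stable matching polytope $P := SM\SMP G$.

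Assume for contradiction that $H$ is not bipartite. As $H = H_I$ by Theorem~\ref{thm:properties_of_H}(iv), Theorem~\ref{thm:chen} applied to $\SMP H$ shows $FSM\SMP H$ is not integral, so it has a fractional vertex; by half-integrality and Chen et al.'s semi-stable-partition description this vertex is $x_{\mathcal C}$ for a semi-stable partition $\mathcal C$ of $\SMP H$ whose $\tfrac12$-valued edges form a nonempty disjoint union of cycles with cyclic preferences, all of them odd (an even one would exhibit $x_{\mathcal C}$ as a midpoint of two other points of $FSM\SMP H$, by perturbing $\pm\epsilon$ alternately along it). Fix one such odd cycle $C = v_0 v_1 \cdots v_{2\ell} v_0$, with $E(C) \subseteq E_H \subseteq E$. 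A useful preparatory observation is that $x_{\mathcal C}$ is in fact a vertex of $FSM\SMP G$ too: for $f \in E_H$ feasibility follows from $\phi_H(f) = \phi_G(f) \cap E_H$ and $\mathrm{supp}(x_{\mathcal C}) \subseteq E_H$; for $f = ab \in E \setminus E_H$ it follows from the deletion rule of Algorithm~1, since at its removal $ab$ was the worst remaining edge at some endpoint, say $a$, in the then-current graph $G' \supseteq H$, whence $\delta_H(a) \subseteq \phi_G(ab)$ and $x_{\mathcal C}(\phi_G(ab)) \ge x_{\mathcal C}(\delta_H(a)) = 1$; vertex-ness transfers because a decomposition $x_{\mathcal C} = \tfrac12(y_1+y_2)$ in $FSM\SMP G$ forces $y_1, y_2$ onto $E_H$ and hence into $FSM\SMP H$.

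Because $K$ is bipartite it cannot contain the odd cycle $C$, so some edge $g = ab \in E(C) \setminus E_K$ exists, and $g \notin E_M$ (else a stable matching of $\SMP G$ through $g$ would lie in $SM\SMP K$ and force $g \in E_K$). The heart of the proof is to turn this omitted cycle edge into a matching $M'$ that is stable in $\SMP K$ but not in $\SMP G$, contradicting $SM\SMP K = SM\SMP G$. I would construct $M'$ by a rotation along $C$: orient $C$ so that each $v_j$ prefers $v_{j+1}$ to $v_{j-1}$ (its cyclic preference), write $g = v_i v_{i+1}$, take the unique perfect matching $\{v_{i+2}v_{i+3}, v_{i+4}v_{i+5}, \dots, v_{i-1}v_i\}$ of the path $C - v_{i+1}$, adjoin the matching part of $\mathcal C$ on the vertices lying off the cycles (treating any further odd cycles of $\mathcal C$ likewise), and re-route the left-over vertex $v_{i+1}$ along an edge of $E_M$ incident to it (which lies in $E_K$). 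One then checks --- using that $x_{\mathcal C}$ satisfies every stability constraint of $FSM\SMP G$ with equality on its support (Lemma~\ref{lemma:AR_xe_positive}), the sandwiching property $f_H(v) \pref v u \pref v l_H(v)$ of Theorem~\ref{thm:properties_of_H}(vii), and that the best and worst edges at each vertex of $H$ are in $E_M \subseteq E_K$ (Theorem~\ref{thm:properties_of_H}(v),(vi)) --- that $g$ is the \emph{only} edge of $G$ blocking $M'$; hence $M'$ is stable in $\SMP K$, the desired contradiction.

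The main obstacle is this last verification: showing that the rotation produces a \emph{perfect} matching blocked in $G$ by $g$ alone, which calls for a careful choice of the re-routing of $v_{i+1}$ and a uniform handling of the case where $\mathcal C$ has several odd cycles. That is the one place where the fine combinatorial structure of $H$ is used essentially, and where a delicate case analysis appears unavoidable.
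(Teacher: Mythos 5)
Your easy direction and the reduction to a missing edge of $K$ are fine, but the core of your argument has two genuine gaps. First, the claim that a fractional vertex of $FSM\SMP H$ equals $x_{\mathcal C}$ for a partition $\mathcal C$ whose cycles are all \emph{odd} is unjustified and false in general: the $\pm\epsilon$ alternating perturbation along an even cycle can violate the stability constraint of an \emph{intermediate} edge $uv$, i.e.\ one with $\phi(uv)\cap supp(x_{\mathcal C}) = \{uu_+, vv_+\}$, since the perturbation may decrease both $x_{uu_+}$ and $x_{vv_+}$ simultaneously. Hence a fractional vertex can consist entirely of even cycles pinned down by intermediate stability constraints; this is precisely why the integrality criterion of Theorem \ref{thm:integrality_of_overlineFSM} is phrased in terms of the graph $H_{\mathcal C}$ (which includes intermediate edges) being bipartite, rather than in terms of cycle parities. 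The odd cycle you need may live in $E(H_{\mathcal C})\setminus E(\mathcal C)$, where your rotation construction does not apply. Second, even when an odd support cycle $C$ does exist, the decisive step --- producing a perfect matching $M'$ that is stable in $\SMP K$ but blocked in $\SMP G$ only by the omitted edge $g$ --- is only asserted. Matching up $C - v_{i+1}$ leaves $v_{i+1}$ unmatched, and every $E_M$-partner of $v_{i+1}$ is already matched in your partial matching, so the ``re-routing'' displaces another agent and cascades; nothing in the sketch shows that this terminates in a perfect matching, nor that the displaced agents create no new blocking edges inside $K$. You acknowledge this yourself, but it is the heart of the theorem, not a deferred verification.

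For contrast, the paper sidesteps the combinatorics entirely with a short polyhedral argument: run Algorithm 1 on $K$ to obtain $K'$; since in $K'$ each agent's best and worst remaining edges lie in $E_M$, one gets $E(K')\subseteq E_H$, hence every defining inequality of $\overline{FSM}\SMP{K'}$ occurs among those of $\overline{FSM}\SMP{H}$, so $\overline{FSM}\SMP{H}\subseteq\overline{FSM}\SMP{K'}$. By Theorem \ref{thm:integrality_of_overlineFSM} applied to the bipartite $K$ together with Corollary \ref{corr:ovelineG_overlineH}, $\overline{FSM}\SMP{K'}$ is integral, and it has the same integral vertices as $\overline{FSM}\SMP{H}$; therefore $\overline{FSM}\SMP{H}$ is integral, and Theorem \ref{thm:integrality_of_overlineFSM} then forces $H$ to be bipartite. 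To keep your combinatorial route you would essentially have to redevelop the machinery of Section \ref{section:integrality_of_overlineFSM}; I recommend invoking it instead.
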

	
	\subsection{Modifying the FSM polytope}
	\label{section:modigying_FSM}
	
	We start by having a closer look at the fractional stable matching polytope $FSM \SMP G$. It is easy to see that each edge $e \in E \setminus E(G_I)$ satisfies $x_e = 0$ for all points $x \in FSM \SMP G$, but it is in general not true that $e \not\in E_M \Rightarrow x_e = 0,\; \forall x \in FSM \SMP G$. Thus we start by adding the constraint  $x_e = 0$ for each edge $e \in E \setminus E_M$ to $FSM \SMP G$.
	
	\begin{definition}
		Given a system $\SMP G$ with $E_M$ as above, let 
		$${FSM}' \SMP G = \left\{ x \in \R^{E} :\;\; \begin{array}{rl} 
		x_e = 0 & \forall e \in E \setminus E_M  \\ 
		x(\delta(v) \cap E_M) \leq 1 & \forall v \in V \\ 
		x (\phi(e) \cap E_M) \geq 1 & \forall e \in E \\ 
		x \geq 0 &  \end{array}\right\}.$$
		$$\overline{FSM} \SMP G = \left\{ x \in \R^{E_M} : \begin{array}{rl} 
		x(\delta(v) \cap E_M) \leq 1 & \forall v \in V \\ 
		x (\phi(e) \cap E_M) \geq 1 & \forall e \in E \;\;\;\;\;\;\; \\ 
		x \geq 0 &  \end{array}\right\}.$$ 
		\label{def:overlineFSM}
	\end{definition}
	
	We observe that $FSM' \SMP G$ is the intersection of $FSM \SMP G$ with some of its facets - namely the facets $\{x_e = 0\}_{e \in E \setminus E_M}$. Further,
	 $\overline{FSM} \SMP G$ is just a restriction of $FSM' \SMP G$ to the edges in $E_M$,
     where we drop coordinates in $E \setminus E_M$, which satisfy $x_e = 0 \;\forall x \in FSM' \SMP G$. 	Thus, the following holds:
	
	\begin{remark}
		For an arbitrary instance $\SMP G$,
		\begin{itemize} 
			\item each vertex of $FSM'\SMP G$ is also a vertex of $FSM \SMP G$,
			\item all vertices of $FSM'\SMP G$ are half-integral,
			\item the integral vertices of $FSM'\SMP G$ are precisely the integral vertices of $FSM \SMP G$,
			\item there is a 1-to-1 correspondence between the vertices of $\overline{FSM}\SMP G$ and those of $FSM' \SMP{G}$. 
				Each vertex of $\overline{FSM}\SMP G$ is a vertex of $FSM' \SMP G$ projected to $E_M$. 
		\end{itemize}
		\label{remark}
	\end{remark}
Following these remarks, we can observe that the integral vertices of $\overline{FSM} \SMP G$ correspond to stable matchings of $\SMP G$. Moreover, it can be the case that by intersecting the polytope $FSM \SMP G$ with its facets $\{x_e = 0\}$ for all $e \in E \setminus E_M$ we cut off all its fractional vertices, and the polytope $\overline{FSM} \SMP G$ is integral. 
	
	
	Let us revisit Lemma \ref{lemma:worst_edges_redundant} from the previous section in the light of the polytope $\overline{FSM} \SMP G$. We defer the proof to the Appendix.
	
	\begin{lemma}
		Let $\SMP G$ be given and let an edge $e = uv \in E \setminus E_M$ be such that $u = l_G(v)$. Then 
		\begin{enumerate}[(i)]
			\item $\SMP G$ and $(G \setminus e,\prec)$ have the same set of stable matchings.
			\item $\overline{FSM} \SMP G = \overline{FSM} \SMP {G \setminus e}$. 
			In other words, the inequality $x(\phi(e) \cap E_M) \geq 1$ is redundant in $\overline{FSM} \SMP G$.
		\end{enumerate} 
		\label{lemma:worst_edges_redundant_inequalities}
	\end{lemma}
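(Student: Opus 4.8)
Part~(i) is exactly Lemma~\ref{lemma:worst_edges_redundant}, so I would simply invoke it; all the content is in~(ii). The plan is to show that $\overline{FSM}\SMP G$ is obtained from $\overline{FSM}\SMP{G\setminus e}$ by adjoining a single inequality, the stability constraint of $e$, and that this inequality is implied by the rest. First I would record that the two instances have the same set $E_M$: by part~(i) they have the same stable matchings, and none of these uses $e$ since $e\notin E_M$; hence $E_M$ is unambiguous, and both polytopes live in $\R^{E_M}$. Now compare the two defining systems constraint by constraint. Since $e\notin E_M$, deleting $e$ leaves every set $\delta(w)\cap E_M$ (for $w\in V$) and every set $\phi(e')\cap E_M$ (for $e'\in E\setminus\{e\}$) unchanged, because the sets $\delta(w)$ and $\phi(e')$ change by at most $\{e\}$ when $e$ is removed; the nonnegativity constraints are identical as well. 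Therefore the only constraint present in the description of $\overline{FSM}\SMP G$ and absent from that of $\overline{FSM}\SMP{G\setminus e}$ is $x(\phi(e)\cap E_M)\ge 1$, so that
$$\overline{FSM}\SMP G \;=\; \overline{FSM}\SMP{G\setminus e}\ \cap\ \{x : x(\phi(e)\cap E_M)\ge 1\},$$
and it suffices to prove $x(\phi(e)\cap E_M)\ge 1$ for every $x\in\overline{FSM}\SMP{G\setminus e}$.

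The geometric engine is that $u=l_G(v)$ forces every edge of $\delta(v)$ other than $e$ to dominate $e$ at $v$ (taking $w=v$ in the definition of $\phi$), so $\delta(v)\subseteq\phi(e)$; intersecting with $E_M$ and using $e\notin E_M$ gives $\delta(v)\cap E_M\subseteq\phi(e)\cap E_M$, where $\delta(v)\cap E_M$ is the same set in $G$ and in $G\setminus e$. Hence, for any $x\in\overline{FSM}\SMP{G\setminus e}$, nonnegativity yields $x(\phi(e)\cap E_M)\ge x(\delta(v)\cap E_M)$, and it remains to show $x(\delta(v)\cap E_M)\ge 1$. This is the one place where the defining inequalities of $\overline{FSM}\SMP{G\setminus e}$ are not enough, since they only give the reverse bound $\le 1$; I expect it to be the crux. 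To obtain it I would use that $\SMP{G\setminus e}$ is in the perfect-matching case: its stable matchings are exactly the (perfect) stable matchings of $\SMP G$, none of which uses $e$, so $V^0=\emptyset$ for $\SMP{G\setminus e}$, and Lemma~\ref{lemma:AR_V0_V1} gives $x(\delta(w))=1$ for every $w$ and every $x\in FSM\SMP{G\setminus e}$. By Remark~\ref{remark}, each vertex of $\overline{FSM}\SMP{G\setminus e}$ is the restriction to $E_M$ of a vertex of $FSM'\SMP{G\setminus e}$, which is a vertex of $FSM\SMP{G\setminus e}$ vanishing off $E_M$; hence $x(\delta(v)\cap E_M)=1$ at every vertex of $\overline{FSM}\SMP{G\setminus e}$, and since this polytope is bounded (it lies in $[0,1]^{E_M}$) it is the convex hull of its vertices, so $x(\delta(v)\cap E_M)=1$ throughout. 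Chaining the inequalities gives $x(\phi(e)\cap E_M)\ge 1$ for all $x\in\overline{FSM}\SMP{G\setminus e}$, which proves both $\overline{FSM}\SMP G=\overline{FSM}\SMP{G\setminus e}$ and the redundancy of the constraint $x(\phi(e)\cap E_M)\ge 1$ in $\overline{FSM}\SMP G$.

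The main obstacle, as noted, is the lower bound $x(\delta(v)\cap E_M)\ge 1$; everything else is bookkeeping with the sets $\delta(\cdot)\cap E_M$ and $\phi(\cdot)\cap E_M$ together with the fact that deleting a non-$E_M$ edge perturbs none of them. The remaining points to be careful about are the reduction-to-perfect-case details for $\SMP{G\setminus e}$ (that $V^0=\emptyset$, and that $v$ still has a neighbour in $E_M$ after $e$ is deleted, which holds because $v$ is matched in every stable matching), and the step ``holds at all vertices, hence everywhere'', which is valid precisely because $\overline{FSM}\SMP{G\setminus e}$ is a bounded polytope.
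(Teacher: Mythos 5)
Your proposal is correct, and its overall skeleton matches the paper's: both proofs reduce part (ii) to the containment $\delta(v)\cap E_M\subseteq\phi(e)\cap E_M$ (a consequence of $u=l_G(v)$) together with the lower bound $x(\delta(v)\cap E_M)\geq 1$ on $\overline{FSM}\SMP{G\setminus e}$. Where you differ is in how that lower bound is obtained. The paper gets it \emph{syntactically}: it takes $w=l_{\overline H}(v)$ and $f=wv$, uses Lemma~\ref{lemma:best_worst_edges_in_EM} to conclude $v=f_{\overline H}(w)$ and hence $\phi(f)\cap E_M=\delta(v)\cap E_M$, so the stability constraint of $f$ — which is one of the defining inequalities of $\overline{FSM}\SMP{G\setminus e}$ — literally \emph{is} the inequality $x(\delta(v)\cap E_M)\geq 1$, and the constraint for $e$ is dominated by it plus nonnegativity. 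You instead argue \emph{geometrically}: $V^0=\emptyset$ for $\SMP{G\setminus e}$, so Lemma~\ref{lemma:AR_V0_V1} forces $x(\delta(v))=1$ on $FSM\SMP{G\setminus e}$; via Remark~\ref{remark} every vertex of $\overline{FSM}\SMP{G\setminus e}$ inherits $x(\delta(v)\cap E_M)=1$, and convexity of the bounded polytope extends this everywhere. Both arguments are sound. The paper's version buys a slightly stronger statement — the removed inequality is implied by a single remaining inequality of the explicit system, which is exactly the sense in which it is "redundant" — and stays entirely at the level of the inequality description; yours avoids invoking Lemma~\ref{lemma:best_worst_edges_in_EM} and the existence of the edge $f$, at the cost of routing through the vertex characterization and the perfect-matching normalization. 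Your bookkeeping that the two systems differ in exactly the one constraint (since deleting $e\notin E_M$ perturbs no set $\delta(w)\cap E_M$ or $\phi(e')\cap E_M$, and $E_M$ is unchanged by part (i)) is correct and is implicitly assumed, rather than spelled out, in the paper.
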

Note that the removal of an edge that is the worst remaining choice for one of its endpoints does not change the polytope $\overline{FSM} \SMP G$. This leads us to the following result.
	
	\begin{corollary}
		For any $\SMP G$, we have that $\overline{FSM} \SMP G = \overline{FSM} \SMP H$.
		\label{corr:ovelineG_overlineH}	
	\end{corollary}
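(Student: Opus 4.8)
The plan is to obtain the corollary by telescoping Lemma \ref{lemma:worst_edges_redundant_inequalities} along the sequence of edge deletions performed by Algorithm 1. Since $H$ is well-defined (Lemma \ref{lemma:H_well_defined}), I would fix one valid run of the algorithm, producing a chain $G = G_0 \supseteq G_1 \supseteq \dots \supseteq G_m = H$ with $G_{i+1} = G_i \setminus e_i$, where each $e_i = u_iv_i$ satisfies $e_i \notin E_M$ and $u_i = l_{G_i}(v_i)$.

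The first step is bookkeeping: I would check that the set $E_M$ is the same for every $G_i$, so that all the polytopes $\overline{FSM}\SMP{G_i}$ live over the same ground set and are defined in a consistent way. This follows because, by Theorem \ref{thm:properties_of_H}(ii) (equivalently, by iterating Lemma \ref{lemma:worst_edges_redundant}), each $\SMP{G_i}$ has exactly the same set of stable matchings as $\SMP G$, hence the set of edges appearing in some stable matching of $\SMP{G_i}$ equals $E_M$. I would also note that for any edge $e$ surviving the deletion of $e_i$ one has $\phi_{G_{i+1}}(e) \cap E_M = \phi_{G_i}(e) \cap E_M$, because $e_i \notin E_M$; consequently $\overline{FSM}\SMP{G_i}$ and $\overline{FSM}\SMP{G_{i+1}}$ are described by literally the same matching and non-negativity constraints and the same stability constraints, except for the single constraint $x(\phi(e_i) \cap E_M) \geq 1$.

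The second step is to invoke Lemma \ref{lemma:worst_edges_redundant_inequalities} at each stage: applied to the instance $\SMP{G_i}$ and the edge $e_i$ (whose hypotheses $e_i \in E(G_i) \setminus E_M$ and $u_i = l_{G_i}(v_i)$ hold by construction), part (ii) tells us that the extra inequality $x(\phi(e_i) \cap E_M) \geq 1$ is redundant, so $\overline{FSM}\SMP{G_i} = \overline{FSM}\SMP{G_{i+1}}$. Chaining these equalities for $i = 0,\dots,m-1$ gives
$$\overline{FSM}\SMP G = \overline{FSM}\SMP{G_0} = \cdots = \overline{FSM}\SMP{G_m} = \overline{FSM}\SMP H,$$
as desired.

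I do not expect a genuine obstacle here; the only point that needs a line of justification is the invariance of $E_M$ (and hence of the ambient space and of the matching/non-negativity constraints) under the deletions, after which the statement is an immediate telescoping of Lemma \ref{lemma:worst_edges_redundant_inequalities}.
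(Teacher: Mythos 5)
Your proof is correct and follows essentially the same route as the paper: the corollary is obtained by iterating Lemma \ref{lemma:worst_edges_redundant_inequalities}(ii) over the edge deletions of Algorithm 1, each of which leaves $\overline{FSM}$ unchanged. The bookkeeping you add (invariance of $E_M$ and of $\phi(e)\cap E_M$ under the deletions, so the polytopes at successive stages are defined over the same ground set and differ by a single redundant inequality) is exactly the implicit content of the paper's one-line justification.
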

	
	\subsection{Conditions for integrality of $\overline{FSM} \SMP G$}
	\label{section:integrality_of_overlineFSM}
	
	We now give sufficient and necessary conditions for integrality of $\overline{FSM} \SMP G$. For the rest of this section, we will consider semi-stable partitions with respect to the polytope $\overline{FSM} \SMP G$ (note that in Definition \ref{def:semi} they were defined with respect to $FSM \SMP G$).  Analogously to our original definition, we will say that $\mathcal C$ is a semi-stable partition w.r.t $\overline{FSM} \SMP G$, if and only if $\mathcal C$ is a collection of vertex-disjoint edges and cycles with cyclic preferences in $E_M$, such that the point $x_{\mathcal C}$ assigning value 1 to the single edges and $\frac{1}{2}$ to the edges on the cycles is a feasible point of $\overline{FSM} \SMP G$.
	
	We let $E(\mathcal C)$ be the edges in $\mathcal C$, and we use the notation of Chen et al. \cite{chen2012} to define the set of edges $$E_{\mathcal C} = \{ uv \in E_H: \exists \, e_1,e_2,f_1,f_2 \in E(\mathcal C): e_1 \preceq_u uv \preceq_u e_2 \text{ and } f_1 \preceq_v uv \preceq_v f_2  \}.$$
	 Observe that $E_{\mathcal C}$ consists of all edges in $\mathcal C$ and of all so-called intermediate edges, which are edges with both endpoints on a cycles from $\mathcal C$ that fit in between the edges in $\mathcal C$ in both endpoint's preference lists. We then let $H_{\mathcal C}$ be the subgraph of $H$ induced by the edges $E_{\mathcal C}$. 
	 
	 The following lemma, whose proof is in the Appendix, characterizes the vertices of $\overline{FSM} \SMP G$.
	 
\begin{lemma}
		Let $x$ be a vertex of $\overline{FSM} \SMP G$. Then $x = x_\mathcal C$ for some semi-stable partition $\mathcal C$ w.r.t. $\overline{FSM} \SMP G$.
		\label{lemma:vertices_semistable_partitions}
	\end{lemma}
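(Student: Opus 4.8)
The plan is to mimic the proof of the analogous statement for $FSM \SMP G$ given by Chen et al.\ \cite{chen2012}, but carried out inside the restricted polytope $\overline{FSM} \SMP G$, exploiting the fact that $\overline{FSM} \SMP G$ lives only on the coordinates $E_M$ and (via Remark \ref{remark} and Corollary \ref{corr:ovelineG_overlineH}) its vertices are exactly the projections of vertices of $FSM' \SMP G$, which in turn are vertices of $FSM \SMP G$. So first I would take a vertex $x$ of $\overline{FSM} \SMP G$; by Remark \ref{remark} there is a vertex $\tilde x$ of $FSM \SMP G$ whose projection to $E_M$ is $x$, and $\tilde x$ is half-integral with $\tilde x_e = 0$ for all $e \in E \setminus E_M$. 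Applying the Chen et al.\ characterization to $\tilde x$ gives a semi-stable partition $\mathcal C$ (w.r.t.\ $FSM \SMP G$) with $\tilde x = x_{\mathcal C}$. It remains to argue that $\mathcal C$ is in fact a semi-stable partition w.r.t.\ $\overline{FSM} \SMP G$, i.e.\ that all edges of $\mathcal C$ lie in $E_M$ and that $x_{\mathcal C}$ (restricted to $E_M$) satisfies the defining inequalities of $\overline{FSM} \SMP G$.

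The edges in $\mathcal C$ lie in the support of $\tilde x$ (they get value $1$ or $\tfrac12$), and $\mathrm{supp}(\tilde x) \subseteq E_M$ since $\tilde x_e = 0$ off $E_M$; hence $E(\mathcal C) \subseteq E_M$, so $\mathcal C$ is a legitimate candidate collection of vertex-disjoint edges and cyclic-preference cycles inside $E_M$. For feasibility: the matching inequalities $x_{\mathcal C}(\delta(v)\cap E_M)\le 1$ and non-negativity are immediate from $x = \tilde x|_{E_M} \in \overline{FSM}\SMP G$ already (or directly from the partition structure — each vertex is on exactly one $C_i$). For the stability inequalities $x_{\mathcal C}(\phi(e)\cap E_M)\ge 1$ for every $e\in E$: this is exactly the content of $x \in \overline{FSM}\SMP G$, which we assumed, since $x = x_{\mathcal C}$ on $E_M$. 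Thus $x_{\mathcal C}$ as a point of $\R^{E_M}$ is feasible for $\overline{FSM}\SMP G$, which is precisely what "semi-stable partition w.r.t.\ $\overline{FSM}\SMP G$" means, and $x = x_{\mathcal C}$.

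The one place that needs genuine care is making sure there is no mismatch between the two notions of $\phi(e)$: in $FSM \SMP G$ the stability constraint reads $\tilde x(\phi(e)) \ge 1$, whereas in $\overline{FSM}\SMP G$ it reads $x(\phi(e)\cap E_M)\ge 1$. Since $\tilde x$ vanishes outside $E_M$ we have $\tilde x(\phi(e)) = \tilde x(\phi(e)\cap E_M) = x(\phi(e)\cap E_M)$, so the two constraints coincide on the relevant points and no information is lost — but I would state this identity explicitly, as it is the crux of why the Chen et al.\ semi-stable partition for $\tilde x$ transports to a semi-stable partition for $x$. I would also double-check that every vertex $v\in V$ is covered by some $C_i$: a vertex not on any cycle or single edge of $\mathcal C$ would have $\tilde x(\delta(v)) = 0 < 1$; if $v \in V^1$ this contradicts Lemma \ref{lemma:AR_V0_V1}, and after the reduction of Theorem \ref{thm:reduce_to_perfect_matchings} we may assume $V^0 = \emptyset$, so every $v$ is covered. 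The main obstacle, then, is not any deep argument but rather bookkeeping: correctly importing the Chen et al.\ result (which is stated for vertices of $FSM\SMP G$, not $\overline{FSM}\SMP G$) and verifying that the restriction/projection between the two polytopes preserves exactly the data defining a semi-stable partition.
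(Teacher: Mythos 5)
Your proof is correct, and it opens with the same move as the paper: using Remark \ref{remark} to lift a vertex $x$ of $\overline{FSM} \SMP G$ to a half-integral vertex $\tilde x$ of $FSM \SMP G$ supported inside $E_M$. Where you diverge is in the middle step. You invoke the Chen et al.\ characterization of vertices of $FSM \SMP G$ as a black box and then spend your effort transporting the resulting semi-stable partition back to $\overline{FSM} \SMP G$ --- checking $E(\mathcal C) \subseteq E_M$ via the support of $\tilde x$, and feasibility via the identity $\tilde x(\phi(e)) = x(\phi(e) \cap E_M)$, which holds because $\tilde x$ vanishes off $E_M$. The paper instead re-derives the structure from scratch: half-integrality together with the equality $x(\delta(v)) = 1$ for every $v$ (Lemma \ref{lemma:AR_V0_V1}, after the reduction to perfect stable matchings) forces the support to consist of single edges and vertex-disjoint cycles, and then Lemma \ref{lemma:AR_xe_positive} applied edge by edge around each cycle yields the cyclic preferences by induction. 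Your route is shorter and leans on the quoted external result; the paper's is self-contained and makes the cyclic-preference argument explicit. Both are valid, and your explicit treatment of the $\phi(e)$ versus $\phi(e) \cap E_M$ bookkeeping --- which the paper's proof does not spell out --- is exactly the right place to be careful.
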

With this, we obtain the following equivalent conditions for the integrality of $\overline{FSM} \SMP G$.
\begin{theorem}
		The following are equivalent.
		\begin{enumerate}
			\item $\overline{FSM} \SMP G$ is integral,
			\item $H$ is bipartite,
			\item $H_\mathcal C$ is bipartite for any semi-stable partition $\mathcal C$ w.r.t. $\overline{FSM} \SMP G$.
		\end{enumerate}
		Moreover, if $x$ is a vertex of $\overline{FSM} \SMP G$, then $x$ is half-integral.
		\label{thm:integrality_of_overlineFSM}
	\end{theorem}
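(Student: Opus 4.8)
The plan is to prove the three equivalences cyclically, $(1)\Rightarrow(2)\Rightarrow(3)\Rightarrow(1)$, and then establish half-integrality as a consequence of the half-integrality already recorded in Remark~\ref{remark}. The backbone of the argument is Lemma~\ref{lemma:vertices_semistable_partitions}, which tells us that every vertex of $\overline{FSM}\SMP G$ is $x_{\mathcal C}$ for some semi-stable partition $\mathcal C$, together with the structure theory of Chen et al.\ relating a semi-stable partition $\mathcal C$ to the induced subgraph $H_{\mathcal C}$ on the intermediate edges $E_{\mathcal C}$.

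For $(1)\Rightarrow(2)$, I would argue the contrapositive: if $H$ is not bipartite, then since $H=H_I$ by Theorem~\ref{thm:properties_of_H}(iv), phase one of Irving applied to $\SMP H$ produces a non-bipartite graph, so by Theorem~\ref{thm:chen} the polytope $FSM\SMP H$ has a fractional (half-integral) vertex $x_{\mathcal C}$ arising from a semi-stable partition $\mathcal C$ with an odd cycle. The key point is to check that $\mathcal C$ is in fact a semi-stable partition with respect to $\overline{FSM}\SMP G$, i.e.\ that every edge used in $\mathcal C$ lies in $E_M$ and that $x_{\mathcal C}$ satisfies the tightened stability inequalities $x(\phi(e)\cap E_M)\ge 1$. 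Here one uses Corollary~\ref{corr:ovelineG_overlineH} ($\overline{FSM}\SMP G=\overline{FSM}\SMP H$) and Theorem~\ref{thm:properties_of_H}(vi)--(vii): the cycle edges of a semi-stable partition of $\SMP H$ carry value $\tfrac12$, and the constraints they make tight are exactly the ones needed; more carefully, the cyclic-preference cycles produced by phase one in $H$ consist of edges that are neither best nor worst at their endpoints, hence by (vii) need not lie in $E_M$, so a small additional argument is needed to see that the projection of $x_{\mathcal C}$ to $E_M$ is still a fractional vertex of $\overline{FSM}\SMP G$ — this is where I expect the main obstacle, and I would resolve it by showing that the stability constraints for edges in $E_M$ are all tight at $x_{\mathcal C}$ (using Lemma~\ref{lemma:AR_xe_positive} applied inside $\SMP H$) and that $x_{\mathcal C}$ cannot be written as a convex combination within $\overline{FSM}\SMP G$ because any such combination would lift back to $FSM\SMP H$.

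For $(2)\Rightarrow(3)$: if $H$ is bipartite then every subgraph of $H$ is bipartite, and $H_{\mathcal C}$ is by definition an induced subgraph of $H$, so it is bipartite for every semi-stable partition $\mathcal C$; this direction is immediate. For $(3)\Rightarrow(1)$, I would argue the contrapositive once more: suppose $\overline{FSM}\SMP G$ is not integral, so by Lemma~\ref{lemma:vertices_semistable_partitions} it has a fractional vertex $x_{\mathcal C}$ for some semi-stable partition $\mathcal C$ with respect to $\overline{FSM}\SMP G$; then $\mathcal C$ contains at least one cycle $C$ with cyclic preferences, and $C$ is an odd cycle lying inside $H_{\mathcal C}$, which certifies that $H_{\mathcal C}$ is not bipartite. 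The only thing to verify is that this cycle $C$ must be odd — if all cycles in every semi-stable partition were even, one could "switch" along an even cycle to express $x_{\mathcal C}$ as the midpoint of two other feasible points, contradicting vertexhood, exactly as in the classical half-integrality argument for $FSM$; this standard rounding/switching step I would import from Chen et al.\ and adapt to the restricted polytope. Finally, half-integrality of any vertex $x$ of $\overline{FSM}\SMP G$ follows directly from Remark~\ref{remark}, since each such vertex is the projection to $E_M$ of a vertex of $FSM'\SMP G$, which is itself a vertex of $FSM\SMP G$ and hence half-integral by Abeledo--Rothblum.
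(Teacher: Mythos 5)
Your orientation of the equivalences is the reverse of the paper's, and this forces you into two hard directions, one of which contains a genuine error. In your $(3)\Rightarrow(1)$ (argued contrapositively), you claim that a fractional vertex $x_{\mathcal C}$ of $\overline{FSM}\SMP G$ must come from a partition $\mathcal C$ containing an \emph{odd} cycle, justified by a switching argument along even cycles. This is false: a partition consisting only of even cycles can still yield a fractional vertex, because the switch $x_{\mathcal C}\pm\epsilon(\chi^{M_1}-\chi^{M_2})$ along an even cycle violates the stability constraint at an intermediate edge $uv\in E_{\mathcal C}\setminus E(\mathcal C)$ whenever the two edges $uu_+,vv_+$ making up $\phi(uv)\cap supp(x_{\mathcal C})$ lie in the same alternating class of the cycle decomposition; then one of the two perturbed points has $x(\phi(uv)\cap E_M)=1-2\epsilon<1$. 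This is precisely the situation in which $H_{\mathcal C}$ is non-bipartite while every cycle of $\mathcal C$ is even, and it is the whole reason the theorem is phrased in terms of $H_{\mathcal C}$ rather than $\mathcal C$: the certificate of non-bipartiteness must be sought among the intermediate edges, which your argument never examines. Your $(1)\Rightarrow(2)$ also rests on an unresolved lifting step that you flag yourself: the fractional vertex of $FSM\SMP H$ supplied by Theorem~\ref{thm:chen} may put positive mass on edges outside $E_M$, and its projection to $E_M$ need not even lie in $\overline{FSM}\SMP H$, let alone be a vertex of it.

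The paper avoids both problems by running the cycle the other way. $(2)\Rightarrow(1)$ is the easy direction, via integrality of $FSM\SMP H$ for bipartite $H$ together with Remark~\ref{remark} and Corollary~\ref{corr:ovelineG_overlineH}. $(3)\Rightarrow(2)$ follows from the observation that $F=\{vf_H(v):v\in V\}$ is itself a semi-stable partition with $H_F=H$, so bipartiteness of every $H_{\mathcal C}$ yields bipartiteness of $H$ for free. The real work is $(1)\Rightarrow(3)$, done \emph{directly}: write $x_{\mathcal C}$ as a convex combination of stable matchings $M^i$, split the cycle vertices into $W_1$ (matched to their preferred cycle-neighbour in $M^1$) and $W_2$, and show that an intermediate edge inside $W_2\times W_2$ violates stability at $y^1$ while one inside $W_1\times W_1$ violates it at the complementary combination $z=\tfrac{1}{1-\lambda_1}(x-\lambda_1 y^1)$. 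You would need to reproduce this analysis of the intermediate edges to close your gap. Your derivation of half-integrality from Remark~\ref{remark} is correct.
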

We note that both Lemma \ref{lemma:vertices_semistable_partitions} and Theorem \ref{thm:integrality_of_overlineFSM} are analogous to the results proven by Chen et al. \cite{chen2012} for $FSM \SMP G$.
	
	Note that, in particular, if $H_\mathcal {C}$ is not bipartite for some semi-stable partition $\mathcal{C}$ w.r.t $\overline{FSM} \SMP G$, then $\overline{FSM} \SMP G$ is not integral.
	On the other hand, if $\overline{FSM} \SMP G$ is integral, then all the graphs $H_{\mathcal C}$ have the same bipartition, namely the bipartition of $H$. To make the proof more readable, we will refer to semi-stable partitions w.r.t. $\overline{FSM} \SMP G$, simply as semi-stable partitions. 
	
	\smallskip
	\textit{Proof of Theorem \ref{thm:integrality_of_overlineFSM}.}
	
	$(2) \Rightarrow (1)$. 
		If $H$ is bipartite, then $FSM \SMP {H}$ is integral, and so is $\overline{FSM} \SMP H$, by Remark \ref{remark}. 
		By Corollary \ref{corr:ovelineG_overlineH}, $\overline{FSM} \SMP G = \overline{FSM} \SMP H$.
		
	$(3) \Rightarrow (2)$.
		To see this, note that the set $F = \{vf_H(v): v \in V\}$ of all the edges most preferred in $H$ by one of its endpoints is a semi-stable partition. 
		Moreover, $H_F = H$. 
		
    $(1) \Rightarrow (3)$.
		Assume $\overline{FSM} \SMP G$ is integral. By Lemma \ref{lemma:vertices_semistable_partitions}, $\overline{FSM} \SMP G$ is integral if and only if 
		for each semi-stable partition $\mathcal C$, $x_\mathcal C$ is a convex combination of integral points in $\overline{FSM} \SMP G$.

		Fix a semi-stable partition $\mathcal C$. Then we can write $x_\mathcal C = \sum_{i = 1}^k \lambda_i y^i$, where $k \geq 1$, $\lambda_i > 0$, $\sum_{i = 1}^k \lambda_i = 1$ 
		and $y^i$'s are all integral points in $\overline{FSM} \SMP G$. For each $i \in [k]$, denote by $M^i$ the perfect stable matching consisting of edges in $supp(y^i)$.
		For simplicity, we will denote $x_\mathcal C$ by $x$. 
		
		If $k = 1$, then $x = y^1$, thus $\mathcal C$ is a matching. Then $E(H_{\mathcal C}) = \mathcal C$, and $H_\mathcal C$ is bipartite. 
		Hence we can assume $k \geq 2$.	Let $\{e_i = u_iv_i\}_{i \in [m]}$ be the set of single edges in $\mathcal C$, and let 
		$W = V \setminus \{u_i, v_i\}_{i \in [m]}$ be the vertices on cycles in $\mathcal C$.
		
		For each $i \in [k]$ we have 
		$x_e = 0 \Rightarrow y^i_e = 0 \text{    and    } x_e = 1 \Rightarrow y^i_e = 1$,
		so in particular $supp(y^i) \subseteq supp(x)$. 
		Since for each $i \in [k]$, $M^i$ must be a perfect matching whose edges are a subset of $E(\mathcal C)$, 
		we immediately have that all the cycles in $\mathcal C$ must be even. 
		
		Further note that each $v \in W$ is matched to one of only two possible $u$'s in any of the $M^i$'s, namely to one of its two neighbors on the cycle in $\mathcal C$ it is a part of. 
		Let denote these two partners $v_+$ and $v_-$, where $v_+ \pref v v_-$. 
		Then let $W_1 = \{v \in W: {vv_+} \in M^1\}$ be the set of agents that are matched to their more preferred neighbor in $M^1$.
		Let $W_2 = W \setminus W_1$.
		
		\textit{Claim: $(W_1 \cup \{u_i\},W_2 \cup \{v_i\})$ is a bipartition of $H_\mathcal C$.}
		
		\textit{Proof:} Note that $u = v_+ \Leftrightarrow v = u_-$ and, in particular, the only edges of $H_\mathcal C$ that could lie in $W_1 \times W_1$ or in $W_2 \times W_2$ 
		are those in $E(H_{\mathcal C}) \setminus E(\mathcal C)$. 
		Further, for each $uv$ in $E(H_{\mathcal C}) \setminus E(\mathcal C)$, $(\phi(uv) \cap supp(x)) = \{uu_+, vv_+\}$.
		
		If there was an edge $uv \in W_2 \times W_2$, then $y^1(\phi(uv)) = 0$ since neither $uu_+$ nor $vv_+$ gets value 1 in $y^1$, 
		so the stability condition at $uv$ would be violated.
		
		Finally, suppose there was an edge $uv \in W_1 \times W_1$. Let $z = \sum_{i = 2}^k \frac{\lambda_i}{(1 - \lambda_1)} y^i$. 
		Then $z \in \overline{FSM} \SMP G$ and $x = \lambda_1 y^1 + (1 - \lambda_1)z$. Thus $z = \frac{1}{1 - \lambda_1}(x - \lambda_1 y)$ and 
		$$z(\phi(uv)) = \frac{1}{1 - \lambda_1} \left( (x_{uu_+} - \lambda_1 y_{uu_+}) + (x_{vv_+} - \lambda_1 y_{vv_+}) \right) = 
		\frac{2(\frac{1}{2} - \lambda_1) }{1 - \lambda_1} = \frac{1 - 2 \lambda_1}{1 - \lambda_1} < 1.$$
		This finishes the proof of the claim and of the theorem. \qed
	
	\subsection{Proof of Theorem \ref{thm:existence_of_bipartite_representation}}
	\label{section:pf_of_bipartiet_representation}
	
	We are now ready to prove Theorem \ref{thm:existence_of_bipartite_representation}.
	 
	\smallskip
	\textit{Proof of Theorem \ref{thm:existence_of_bipartite_representation}. }
		Suppose that $K$ is a bipartite subgraph of $G$ such that $\SMP G$ and $\SMP K$ have the same set of stable matchings. 
		Let $K'$ be the graph obtained from $K$ by applying Algorithm 1 to it. Since $K'$ is bipartite, 
		by Theorem \ref{thm:integrality_of_overlineFSM} applied to $K$, $\overline{FSM} \SMP K$ is integral, and so is $\overline{FSM} \SMP {K'}$ by 
		Corollary \ref{corr:ovelineG_overlineH}.
		
		Since in $K'$, for each agent we have that its best and worst remaining edge must be in $E_M$, 
		$K'$ contains no edges in $E \setminus E_H$, thus $E(K') \subseteq E_H$. 
		But then all the inequalities from $\overline{FSM} \SMP {K'}$ are in $\overline{FSM} \SMP H$, 
		meaning that $\overline{FSM} \SMP {K'} \supseteq \overline{FSM} \SMP H$. 
		Further, $\SMP H$ and $\SMP {K'}$ have the same set of stable matchings, so the integral vertices of $\overline{FSM} \SMP H$ are the integral vertices of $\overline{FSM} \SMP {K'}$ and vice versa. 
		Hence $\overline{FSM} \SMP H$ is integral.
		 
		By Corollary \ref{corr:ovelineG_overlineH} and Theorem \ref{thm:integrality_of_overlineFSM}, $H$ is bipartite. \qed
	
	\section{Approximation algorithm}
	\label{section:EM_degree_two}
	
	In this section, we study instances where each agent has at most two possible partners in any stable matching, meaning that $\overline H$ consists only of single edges and disjoint cycles with cyclic preferences. Note that even in these instances $\overline H$ being bipartite is not enough to ensure that $\overline {FSM} \SMP G$ is integral. However, we show that we can obtain a 2-approximation algorithm for the minimum-weight stable matching in this case.
	
	\begin{theorem}
		Suppose $\SMP G$ is such that $\overline H$ consists only of single edges and vertex-disjoint cycles. 
		Then, for any weight function $w: E \rightarrow \R_{\geq 0}$, we can find, in polynomial time, a stable matching $M^\star$ such that $w(M^\star) \leq 2 w(M_{OPT})$, where $M_{OPT}$ is a stable matching of $\SMP G$ of minimum weight.
		\label{thm:two_approx}
	\end{theorem}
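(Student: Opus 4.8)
The plan is to solve the linear program $\min\big\{\sum_{e\in E_M} w(e)\,x_e : x\in\overline{FSM}\SMP G\big\}$, extract a structured half-integral optimum, and round it to a stable matching whose weight is at most twice the optimal LP value; since every stable matching is an integral feasible point of $\overline{FSM}\SMP G$, the LP value is at most $w(M_{OPT})$, and a factor-$2$ guarantee follows. First I would record the structure of our instances: because $\overline H$ is a disjoint union of single edges and vertex-disjoint cycles with cyclic preferences and every vertex lies in $V^1$, each single-edge component is forced into every stable matching, each cycle-component $C$ is even, and a stable matching must select on $C$ one of its exactly two perfect matchings. Let $x^\star$ be an optimal vertex of $\overline{FSM}\SMP G$; this can be found in polynomial time, since $E_M$, and hence the explicit description of $\overline{FSM}\SMP G$, is computable in polynomial time. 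By Theorem~\ref{thm:integrality_of_overlineFSM} together with Lemma~\ref{lemma:vertices_semistable_partitions}, $x^\star$ is half-integral and $x^\star=x_{\mathcal C}$ for some semi-stable partition $\mathcal C$. Using that each cycle-component $C$ of $\overline H$ is an induced cycle carrying no other $E_M$-edges on $V(C)$, the partition $\mathcal C$ restricted to $V(C)$ is either $C$ itself or a perfect matching of $C$; this splits the cycle-components into a \emph{committed} family $\mathcal I$, on which $x^\star$ is integral and selects a perfect matching $N_C$, and a \emph{fractional} family $\mathcal D$, on which $x^\star\equiv\tfrac12$ on $E(C)$. Finally, using Irving's algorithm, fix any stable matching $M_0$ of $\SMP G$.

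Next I would define the output matching $M^\star$: on each single-edge component it takes the forced edge; on each $C\in\mathcal I$ it takes $N_C$; and on each $C\in\mathcal D$ it takes $M_0\cap E(C)$, which is one of the two perfect matchings of $C$. This is a perfect matching of $G$. For the weight bound, observe that the edges with $x^\star_e=1$ are exactly the forced single edges together with the edges of the $N_C$, $C\in\mathcal I$, and the edges with $x^\star_e=\tfrac12$ are exactly those of the cycles in $\mathcal D$; hence, writing $w(F)=\sum_{e\in F}w(e)$ for an edge set $F$ and letting $F_0$ be the set of forced single edges,
\[
 w(M^\star)\ \le\ w(F_0)+\sum_{C\in\mathcal I}w(N_C)+\sum_{C\in\mathcal D}w(E(C))\ \le\ 2\sum_{e\in E_M}w(e)\,x^\star_e\ \le\ 2\,w(M_{OPT}),
\]
where the first inequality uses $w\ge 0$ and $w(M_0\cap E(C))\le w(E(C))$, and the middle inequality uses $\sum_e w(e)\,x^\star_e=w(F_0)+\sum_{C\in\mathcal I}w(N_C)+\tfrac12\sum_{C\in\mathcal D}w(E(C))$.

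It remains to show $M^\star$ is stable, and this is where I would work hardest. No edge of $E_M$ blocks $M^\star$: edges inside single-edge components are always matched, and edges on a cycle-component never block either of its two perfect matchings, by the cyclic-preference property. So suppose some $e=uv\in E\setminus E_M$ blocks $M^\star$; since $M^\star$ is perfect this means $M^\star(u)\prec_u v$ and $M^\star(v)\prec_v u$, and feasibility of $x^\star$ gives $x^\star(\phi(e)\cap E_M)\ge 1$. I claim both $u$ and $v$ lie on fractional cycles. Indeed, suppose not — say $u$ lies in a single-edge component or in some $C\in\mathcal I$. Then the unique $E_M$-edge at $u$ with positive $x^\star$-value is $u\,M^\star(u)$, and it is not in $\phi(e)$ because $M^\star(u)\prec_u v$, so $u$ contributes $0$ to $x^\star(\phi(e)\cap E_M)$; feasibility then forces $v$'s contribution to be at least $1$. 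But this contribution is a sum of $x^\star$-values of $E_M$-edges $vw$ with $w\succ_v u$, and inspecting the three possibilities for $v$'s component (single edge, committed cycle, fractional cycle) one checks it can reach $1$ only when $M^\star(v)\succ_v u$ — contradicting $M^\star(v)\prec_v u$. Hence $u$ and $v$ both lie on fractional cycles, so $M^\star$ coincides with $M_0$ at each of them, giving $M_0(u)\prec_u v$ and $M_0(v)\prec_v u$; that is, $e$ blocks $M_0$, contradicting its stability. Therefore $M^\star$ is stable, and the displayed inequalities finish the proof.

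I expect the stability argument to be the delicate point: one must verify, through the endpoint casework above, that the single inequality $x^\star(\phi(e)\cap E_M)\ge 1$ is already strong enough to confine every potential blocking edge to a pair of fractional cycles, and that on those cycles the orientation inherited from an arbitrary reference stable matching is automatically compatible with the choices dictated by $x^\star$ elsewhere. A secondary step requiring care is the structural reduction that lets us treat the cycle-components of $\overline H$ as indivisible atoms — namely that $\mathcal C$ can only leave such a component whole or split it into a perfect matching, and that all such components are even.
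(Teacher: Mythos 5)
Your proof is correct, but it takes a genuinely different route from the paper's. The paper exploits the fact that a stable matching is determined by one binary choice per cycle component of $\overline H$: it replaces $w$ by a weight function $\tilde w$ that is zero everywhere except on one edge per cycle, where it places the difference $w_i^+-w_i^-$ between the two alternating halves, observes that $w(M)=\tilde w(M)+W^-$ for a constant $W^-$, checks that $\tilde w$ is U-shaped, and invokes the Teo--Sethuraman $2$-approximation as a black box. You instead solve the LP over $\overline{FSM}\SMP G$ directly, use Lemma~\ref{lemma:vertices_semistable_partitions} and Theorem~\ref{thm:integrality_of_overlineFSM} to write an optimal vertex as $x_{\mathcal C}$ for a semi-stable partition that either commits each cycle component to one of its two perfect matchings or leaves it at value $\tfrac12$, and round the fractional cycles using a single global reference stable matching $M_0$. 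Your stability argument is sound: the feasibility constraint $x^\star(\phi(e)\cap E_M)\ge 1$ really does force both endpoints of any potential blocking edge in $E\setminus E_M$ onto fractional cycles (because a vertex in an integral component contributes $0$ to $\phi(e)$ when $e$ would improve it, and the other endpoint's contribution can reach $1$ only if all its positively-valued edges, including the one to its $M^\star$-partner, dominate $e$), and using one common $M_0$ on all fractional cycles --- rather than choosing the cheaper half independently per cycle --- is exactly what makes the cross-cycle blocking case collapse onto the stability of $M_0$. What each approach buys: the paper's is shorter given the external black box; yours is self-contained modulo the polyhedral machinery already developed in Section~4, bounds $w(M^\star)$ against the LP relaxation (a slightly stronger per-instance guarantee, losing the factor $2$ only on the cycles where the LP is actually fractional), and makes no use of the U-shape result. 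Two small points to fix: your preference notation is consistently reversed relative to the paper's convention (the paper writes $a\prec_u b$ for ``$u$ prefers $a$ over $b$,'' so a blocking edge satisfies $v\prec_u M^\star(u)$, not $M^\star(u)\prec_u v$, and the membership test for $\phi(e)$ flips accordingly); and you should state explicitly that the cyclic-preference property of the cycle components of $\overline H$, which you use to rule out blocking by $E_M$-edges, follows from Lemma~\ref{lemma:best_worst_edges_in_EM} applied around each degree-two vertex.
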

	
	\begin{proof}
		Suppose that $\overline H$ consists of vertex-disjoint cycles $C_1,\dots, C_k$ and single edges $e_1,\dots, e_m$. 
		Then, by definition of $\overline H$, these cycles must all have even length 
		(otherwise it would not be true that each edge in $\overline H$ is a part of a stable matching) 
		and also that the preferences on the cycles are cyclic (following from Theorem \ref{thm:properties_of_H}).
		
		Let $M$ be a stable matching of $\SMP G$. Then $e_i \in M \;\forall i \in [m]$ and also $|C_k \cap M| = \frac{1}{2}|C_k|, \;\forall i \in [k]$. 
		Hence, for any cycle $C_i = v_i^1,\dots, v_i^{2r_i}$ we have either that $\{v_i^{2j}v_i^{2j+1}\}_{j \in [r_i]} \subseteq M$ or $\{v_i^{2j-1}v_i^{2j}\}_{j \in [r_i]} \subseteq M$. 
		For each $i \in [k]$ let $$w_i^+ = \sum_{j \in [r_i]} w(v_i^{2j-1}v_i^{2j}) \;\;\;\;\;\;\;\;\;\;\text{and}\;\;\;\;\;\;\;\;\;\;w_i^- = \sum_{j \in [r_i]} w(v_i^{2j}v_i^{2j+1})$$ 
		and suppose without loss of generality that $w_i^+ \geq w_i^-$. 
		
		We construct a weight function $\tilde{w}$ as follows: 
		For each $i \in [k]$ we set $\tilde{w} (v_i^1v_i^2) = w_i^+ - w_i^-$ and we set the weights of all the other edges to zero. 
		Intuitively, for each stable matching $M$ and for each cycle $C_i$ we must pick half of the edges of $C_i$ to be in $M$, and there are only two ways of doing it. 
		Hence the individual weights of the edges on the cycles do not play any role. 
		In the new weight function we construct, all edges have weight 0, apart from one on each cycle $C_i$ whose weight we set to the difference between $w_i^+$ and $w_i^-$. 
		
		Then, for any stable matching $M$ of $\SMP G$, we have that $$w(M) = \tilde{w}(M) + W^-,$$ where $$W^- = \sum_{i \in [k]} w_i^- + \sum_{i \in [m]} w(e_i).$$ 
		In particular, $M$ is optimal for $\tilde w$ if and only if it is optimal for $w$.
		
		Now, ovserve that for each agent $v \in V$, at most one of its neighboring edges has non-zero weight in $\tilde{w}$, 
		and this edge would have to be either the most or the least preferred by $V$ in $H$. 
		This implies that the weight function $\tilde{w}$ has the so-called \emph{U-shaped property} of Teo and Sethuraman \cite{teo1998}. 
		Thus we can find, in polynomial time, a stable matching $M^\star$ such that $$\tilde{w}(M^\star) \leq 2 \tilde{w}(M_{OPT})$$ 
		meaning that $$w(M^\star) - W^- \leq 2 (w(M_{OPT}) - W^-)$$ and so in fact $$w(M^\star) \leq 2w(M_{OPT}) - W^- \leq 2w(M_{OPT}).$$ \qed
	\end{proof}
	
\bibliographystyle{splncs03}	
\bibliography{biblio} 	

\begin{thebibliography}{10}
\providecommand{\url}[1]{\texttt{#1}}
\providecommand{\urlprefix}{URL }

\bibitem{abeledo1994}
Abeledo, H.G., Rothblum, U.G.: Stable matchings and linear inequalities.
  Discrete Applied Mathematics  54(1),  1 -- 27 (1994),
  \url{http://www.sciencedirect.com/science/article/pii/0166218X94901309}

\bibitem{khot2008}
Buhrman, H., Khot, S., Regev, O.: Computational complexity 2003 vertex cover
  might be hard to approximate to within 2-epsilon. Journal of Computer and
  System Sciences  74(3),  335 -- 349 (2008),
  \url{http://www.sciencedirect.com/science/article/pii/S0022000007000864}

\bibitem{chen2012}
Chen, X., Ding, G., Hu, X., Zang, W.: The maximum-weight stable matching
  problem: Duality and efficiency. SIAM Journal on Discrete Mathematics  26(3),
   1346--1360 (2012), \url{http://dx.doi.org/10.1137/120864866}

\bibitem{cseh2016}
Cseh, {\'{A}}., Irving, R.W., Manlove, D.F.: The stable roommates problem with
  short lists  abs/1605.04609 (2016), \url{http://arxiv.org/abs/1605.04609}

\bibitem{dias2003}
Dias, V.M., da~Fonseca, G.D., de~Figueiredo, C.M., Szwarcfiter, J.L.: The
  stable marriage problem with restricted pairs. Theoretical Computer Science
  306(1),  391 -- 405 (2003),
  \url{http://www.sciencedirect.com/science/article/pii/S0304397503003190}

\bibitem{feder1994}
Feder, T.: Network flow and 2-satisfiability. Algorithmica  11(3),  291--319
  (1994), \url{http://dx.doi.org/10.1007/BF01240738}

\bibitem{feder1992}
Feder, T.: A new fixed point approach for stable networks and stable marriages.
  Journal of Computer and System Sciences  45(2),  233 -- 284 (1992),
  \url{http://www.sciencedirect.com/science/article/pii/002200009290048N}

\bibitem{gale1962}
Gale, D., Shapley, L.S.: College admissions and the stability of marriage. The
  American Mathematical Monthly  69(1),  9--15 (1962),
  \url{http://www.jstor.org/stable/2312726}

\bibitem{gale1962college}
Gale, D., Shapley, L.S.: College admissions and the stability of marriage. The
  American Mathematical Monthly  120(5),  386--391 (2013),
  \url{http://dx.doi.org/10.4169/amer.math.monthly.120.05.386}

\bibitem{gusfield1989stable}
Gusfield, D., Irving, R.W.: The Stable Marriage Problem: Structure and
  Algorithms. Foundations of Computing Series, MIT Press, Cambridge, MA (1989)

\bibitem{irving1985}
Irving, R.W.: An efficient algorithm for the “stable roommates” problem.
  Journal of Algorithms  6(4),  577 -- 595 (1985),
  \url{http://www.sciencedirect.com/science/article/pii/0196677485900331}

\bibitem{manlove2013algorithmics}
Manlove, D.: Algorithmics of Matching Under Preferences, Series on Theoretical
  Computer Science, vol.~2. World Scientific Publishing Co. Pte. Ltd.,
  Singapore (2013),
  \url{http://dx.doi.org.proxy.lib.uwaterloo.ca/10.1142/8591}, with a foreword
  by Kurt Mehlhorn

\bibitem{roth1992two}
Roth, A.E., Sotomayor, M.: Two-sided Matching, Econometric Society Monographs,
  vol.~18. Cambridge University Press, Cambridge (1990),
  \url{http://dx.doi.org.proxy.lib.uwaterloo.ca/10.1017/CCOL052139015X}, a
  study in game-theoretic modeling and analysis, With a foreword by Robert
  Aumann

\bibitem{rothblum1992}
Rothblum, U.G.: Characterization of stable matchings as extreme points of a
  polytope. Mathematical Programming  54(1),  57--67 (1992),
  \url{http://dx.doi.org/10.1007/BF01586041}

\bibitem{teo1998}
Teo, C.P., Sethuraman, J.: The geometry of fractional stable matchings and its
  applications. Math. Oper. Res.  23(4),  874--891 (Nov 1998),
  \url{http://dx.doi.org/10.1287/moor.23.4.874}

\bibitem{vandevate1992}
Vate, J.H.V.: Linear programming brings marital bliss. Operations Research
  Letters  8(3),  147 -- 153 (1989),
  \url{http://www.sciencedirect.com/science/article/pii/0167637789900412}

\end{thebibliography}
	
\newpage
	
\section*{Appendix}


\subsubsection{Proof of Lemma \ref{lemma:best_worst_edges_in_EM}.}
		$(\Rightarrow)$: Suppose that $u = f_{\overline H}(v)$. 
		If there was any partner $w$ for $u$ such that $uw \in E_M$ and $v \pref u w$, 
		then in a (perfect) stable matching $M$ containing $uw$, $v$ would have a different partner than $u$, in particular $u \pref v M(v)$ and the edge $uv$ would be blocking.
		Hence $v = l_{\overline H}(u)$. 
		
		$(\Leftarrow)$: Every agent $w$ is the most preferred in $\overline H$ by precisely one agent, 
		as otherwise there would be two agents $x, y$ such that $w = f_{\overline H} (x) = f_{\overline H} (y)$, implying $x = l_{\overline H} (w) = y$. For each agent $w$,
		denote by $w^-$ the unique agent for which $w = f_{\overline H}(w^-)$. Then, for each $w$, $l_{\overline H}(w) = w^-$ by the above implication. 
		In particular, for every agent $w$ we have $x = l_{\overline H} (w) \Leftrightarrow x = w^- \Leftrightarrow w = f_{\overline H}(x)$. \qed
		
\subsubsection{Proof of Lemma \ref{lemma:edges_in_a_SM}.}
Let $\SMP G$ and $e = uv$ be given. There exists a stable matching $M$ of $\SMP G$ s.t. $uv \in M$ if and only if there exists a perfect stable matching $M'$ of $\SMP {G[V']}$ 
not containing any of the edges of $G[V']$ which would make the matching $M' \cup \{e\}$ blocking in $G$. 

If $M'$ is a matching in $G[V']$ containing an edge $wy$ such that $uw \in E$, $w\pref u v$, and $u \pref w y$, then the edge $uw$ is blocking for the matching $M' \cup \{uv\}$. 
Thus any matching $M'$ for which the matching $M' \cup \{uv\}$ is stable in $\SMP G$ excludes all edges in $E_u \cup E_v$. 
On the other hand, any perfect matching $M'$ that is stable in $\SMP {G'}$ is stable in $\SMP {G[V']}$, 
since none of the edges in $E_u \cup E_v$ is blocking for $M'$ in $\SMP {G'}$.

Thus $e$ appears in a stable matching of $\SMP G$ if and only if there exists a perfect stable matching  in $\SMP {G'}$. 
$G'$ can be constructed in polynomial time and it can be decided whether $\SMP {G'}$ contains a perfect stable matching using Irving's algorithm in polynomial time. \qed

\subsubsection{Proof of Lemma \ref{lemma:chceck_removing_edge_no_effect}.}
		Suppose we want to check whether an edge $e = uv \in E \setminus E_M$ can be removed without affecting the set of stable matchings. 
		Let $\psi_u(e) = \{ f \in E: e \pref u f\}$ and $\psi_v(e) =  \{f \in E: e \pref v f\}$ be the sets of edges 
		strictly dominated by the edge $e$ at the endpoints $u$ and $v$, respectively. 
		
		Then a matching $M$ is stable in $\SMP {G \setminus e}$, but not in $\SMP G$ if and only if $M$ is a (perfect) stable matching of $\SMP {G \setminus e}$ 
		such that $|M \cap \psi_v(e)| = |M \cap \psi_u(e)| = 1.$
		
		Since $|\psi_u(e)|,|\psi_v(e)| \leq n$, we have to check at most $\mathcal O(n^2)$ combinations of edges, 
		checking for each pair $(f', f'') \in \psi_u(e) \times \psi_v(e)$ whether it can be extended to a perfect stable matching $M$ of $\SMP {G \setminus e}$. 
		For each such a pair, the check can be carried in polynomial time \cite{dias2003}. \qed

\subsubsection{Proof of Lemma \ref{lemma:worst_edges_redundant}.}
		If there was a matching $M'$ that was stable in $\SMP {G\setminus e}$, but not stable in $\SMP G$, then $e$ would be the only edge blocking $M'$. 
		Since $u = l_G(v)$ and all stable matchings of $\SMP G$ and $\SMP {G \setminus e}$ are perfect, we must have $M'(v) \pref v u$, a contradiction. \qed

\subsubsection{Proof of Lemma \ref{lemma:H_well_defined}.}
		Suppose for contradiction that two executions of the algorithm, $\mathcal T_1$ and $\mathcal T_2$, yield two different graphs $H_1$ and $H_2$, respectively. 
		Suppose that $e \in E(H_1) \setminus E(H_2)$ is an edge present in $H_1$ that is not in $H_2$. 
		Denote $E_0$ the set of edges that got deleted in $\mathcal T_2$, prior to the deletion of $e$. 
		Without loss of generality suppose that $E_0 \cap E(H_1) = \emptyset$, 
		so in other words, all the edges that were deleted in $\mathcal T_2$ prior to $e$ were deleted in $\mathcal T_1$ too. 
		Then, since in $G_0 = G[E \setminus E_0]$, we removed $e$ as the next edge in the $\mathcal T_2$, $e$ is the worst remaining option for one of it's endpoints in $G_0$. 
		Then $e$ is also the worst remaining option for one of its endpoints in $H_1$ as $E(H_1) \subset (E \setminus E_0)$ and so $e$ can be deleted from $H_1$, a contradiction. \qed

\subsubsection{Proof of Theorem \ref{thm:properties_of_H}.}
		\begin{enumerate}[(i)]
			\item This is clear as edges in $E_M$ never get removed from the graph. 
			Since each removal preserves the set of stable matchings, $E_M$ does not change as the algorithm goes on.
			\item Similarly, the edges are eliminated so that the set of stable matchings stays unchanged.
			\item Edges removed in phase one of Irving's algorithm are not in $E_M$. 
			The set of edges removed in phase one is $E_I = \{ uv \in E: l_{G_I}(u) \pref u v\}$, 
			i.e. these are precisely those edges that are worse for one of the endpoints then its worst remaining choice in the resulting graph $G_I$. 
			Consequently, each of the edges in $E_I$ at some point becomes an edge as in Lemma \ref{lemma:worst_edges_redundant}
			in the current remaining subgraph of $G$.
			Thus all the edges that get eliminated in phase one get also eliminated during our procedure. Hence $H \subseteq G_I$. 
			\item If this was not true than phase one applied to $H$  would eliminate at least one edge. 
			There would be at least one eliminated edge that was the worst remaining choice for one of its endpoints.
			\item This is a property of any graph after applying phase one of Irving's algorithm to it.
			\item Since $vf_H(v) = f_H(v)l_H(f_H(v))$, if it was not in $E_M$, it would get eliminated.
			\item An edge in $E_H \setminus E_M$ must be neither the last, thus nor the best remaining option for both of its endpoints.
		\end{enumerate} \qed

\subsubsection{Proof of Lemma \ref{lemma:worst_edges_redundant_inequalities}.}
		We have already proved part $(i)$ as Lemma \ref{lemma:worst_edges_redundant}. For part $(ii)$, it is enough to show that $(\phi(e) \cap E_M) \supseteq (\phi(f) \cap E_M)$ for some edge $f \in G \setminus e$.
		Since $u$ is $v$'s worst remaining choice, we have that $(\delta(v) \cap E_M) \subseteq (\phi(e) \cap E_M)$. 
		In other words, all edges in $E_M$ adjacent to $v$ are contained in $\phi(e) \cap E_M$. 
		
		Let $w$ be such that $w = l_{\overline H} (v)$ and let $f = wv$. Note that such an $f$ must exists and also that $f \neq e$. 
		Since $w = l_{\overline H} (v)$, by Lemma \ref{lemma:best_worst_edges_in_EM} we have that $v = f_{\overline H} (w)$ and thus $(\phi(f) \cap E_M) = (\delta(v) \cap E_M)$. 
		Hence $(\phi(f) \cap E_M) \subseteq (\phi(e) \cap E_M)$ and the inequality $x (\phi(e) \cap E_M) \geq 1$ is dominated by the inequality $x(\phi(f) \cap E_M) \geq 1$. \qed

\subsubsection{Proof of Lemma \ref{lemma:vertices_semistable_partitions}.}
		By Remark \ref{remark}, all the vertices of $\overline{FSM} \SMP G$ are projections of the vertices of $FSM \SMP G$. 
		By Theorem \ref{thm:chen} and by the matching inequality (\ref{ineq:match}), which is satisfied with equality by each $v \in V$ by Lemma \ref{lemma:AR_V0_V1}, 
		we see that, for each $v \in V$, we have either that precisely one of its adjacent edges has value 1 and all the others zero, 
		or precisely two of its adjacent edges have value $\frac{1}{2}$ and all the others have value zero. 
		This immediately gives us that $supp(x)$ consists of single edges and vertex-disjoint cycles.
		
		Suppose that $C$ is such a a cycle and w.l.o.g suppose $C = \{v_1,\dots, v_m\}$. 
		Then, since the stability inequality (\ref{ineq:stab}) must be satisfied at the edge $v_1v_2$ and also by Lemma \ref{lemma:AR_xe_positive}, 
		we have that precisely one of the edges $v_2v_3$ and $v_1v_m$ dominates $v_1v_2$ and one is dominated by it. 
		Say, w.l.o.g., that $v_3 \pref {v_2} v_1$ and $v_2 \pref {v_1} v_m$. 
		Then, since $v_3 \pref {v_2} v_1$, Lemma \ref{lemma:AR_xe_positive} applied to edge $v_2v_3$ implies that $v_4 \pref {v_3} v_2$. 
		Inductively, we see that $v_{i+1} \pref {v_i} v_{i-1}$ for all $i = 1,\dots, m$. \qed

\end{document}